\documentclass[11pt,letterpaper]{article}
\usepackage{fullpage}
\usepackage{amsmath,amssymb,amsthm,mathtools,microtype}
\usepackage{tikz}
\usepackage[hidelinks]{hyperref}
\newtheorem{theorem}{Theorem}[section]
\newtheorem{lemma}[theorem]{Lemma}
\newtheorem{claim}[theorem]{Claim}
\newtheorem{corollary}[theorem]{Corollary}
\theoremstyle{definition}
\newtheorem{definition}[theorem]{Definition}
\theoremstyle{remark}
\newtheorem{remark}[theorem]{Remark}
\newtheorem{observation}[theorem]{Observation}
\newcommand{\Q}{\mathbb Q}
\hypersetup{pdftitle={Matrix Multiplication from Polynomial Packings: A Constructive and Elementary Proof of omega <= 9/4},
pdfauthor={Rafael Pass},
pdfsubject={Polynomial packing and an explicit matrix multiplication algorithm}}
\title{Matrix Multiplication from Polynomial Packings:\\
A Constructive Proof of $\omega\le9/4$}
\author{Rafael Pass\thanks{Pearl Research, Cornell Tech, Technion, TAU}}
\date{October 30, 2026}
\begin{document}
\hypersetup{pageanchor=false}
\pagenumbering{gobble}
\maketitle
\thispagestyle{empty}
\begin{abstract}
We give a self-contained constructive proof that the
matrix multiplication exponent satisfies $\omega\le9/4$. In a recent
breakthrough, OpenAI~\cite{original} established this bound in the model
of exact scalar arithmetic over the complex numbers, using
non-constructive arguments in Strassen's asymptotic spectral framework.
We explicitly construct a Strassen-style matrix multiplication identity
and obtain the bound by ordinary block recursion. For every
$\gamma>0$, the resulting algorithm multiplies two $n\times n$
integer matrices using $O(n^{9/4+\gamma})$ integer arithmetic
operations.

As an independent contribution, we  develop a framework of
\emph{polynomial packings}---methods for reducing many polynomial
multiplications to fewer ones using linear operations. We isolate
and generalize a construction of OpenAI~\cite{original} to obtain
a simple packing, and show how packings can be composed to amplify
their ``compression rate", which finally yields Strassen-like identities.
\end{abstract}

\clearpage
\pagenumbering{arabic}
\hypersetup{pageanchor=true}
\section{Introduction}

The exponent $\omega$ of matrix multiplication is the infimum of all
$\tau$ such that two $n\times n$ matrices can be multiplied with
$O(n^\tau)$ arithmetic operations. Strassen~\cite{strassen69} showed
$\omega\le\log_2 7<2.81$. Later bounds follow a scheme developed by
Bini, Capovani, Romani, and Lotti~\cite{bclr79,bini},
Sch\"onhage~\cite{schonhage81}, and Strassen~\cite{strassen87}.
An explicit identity computes several products together, possibly with
an error that vanishes with a parameter. Combining identities amplifies
the saving, and the resulting independent products give a bound on
$\omega$. Refinements of Strassen's laser method, starting from the
identity of Coppersmith and Winograd~\cite{cw90}, reached
$\omega<2.371339$~\cite{alman25} and
$\omega<2.371177$~\cite{dupont26}.

In a recent breakthrough, OpenAI showed $\omega\le9/4$~\cite{original},
following a companion bound
$\omega<2.258$~\cite{companion}, starting from identities for polynomial
multiplication. Their theorem is stated for \emph{exact arithmetic}
over the complex numbers, charging unit cost per scalar operation.
Their proof uses Strassen's asymptotic
spectrum~\cite{strassen88}, an algebraic and topological framework for
comparing bilinear maps through numerical invariants. Concavity
strengthens inequalities for these invariants; spectral duality then
converts them into the exponent bound. Their argument establishes the
exponent bound without explicitly constructing the corresponding algorithms.

We give an explicit, constructive algorithm reaching the same bound
using integer arithmetic throughout; furthermore, our construction
is self-contained. Our starting point is the three-for-one polynomial
degeneration of~\cite[Lemma~4.2]{original}. We generalize it to a
$(c+2)$-for-$c$ polynomial packing and next use our $(c+2)$-for-$c$ gadget, together with
tensoring, to construct a Strassen-style identity: a fixed
$v\times v$ block matrix product can be computed using $r$ products
of linear combinations of its blocks, with $\log_v r$ arbitrarily
close to $9/4$. This replaces the usual $v^3$ block products by $r$,
just as Strassen replaces eight by seven. Applying the fixed identity
recursively then gives the exponent bound.

\begin{theorem}\label{thm:main}
For every $\gamma>0$, there is an explicit recursive algorithm that,
for every positive integer $n$, multiplies two $n\times n$ integer
matrices with $O(n^{9/4+\gamma})$ integer arithmetic operations.
In particular, $\omega\le9/4$.
\end{theorem}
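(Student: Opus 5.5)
The plan is to prove Theorem~\ref{thm:main} in two stages. First, for each $\gamma>0$, construct a fixed exact bilinear identity for $v\times v$ block matrix multiplication that uses $r$ block products with $\log_v r\le 9/4+\gamma$. Second, run the usual Strassen-style recursion. The second stage is routine. Pad $n$ up to the next power $v^k$; this at most multiplies $n$ by $v$, a constant. Split both matrices into $v\times v$ blocks, form the $r$ pairs of linear combinations of blocks prescribed by the identity, recurse on each pair, and recombine. With $L$ the constant number of additions and scalar multiplications in the identity, the cost satisfies
\[
T(v^k)\le r\,T(v^{k-1})+L\,v^{2(k-1)}.
\]
Since $r>v^2$, this solves to $O(r^k)=O\bigl((v^k)^{\log_v r}\bigr)=O(n^{9/4+\gamma})$.

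To stay in integer arithmetic, I will take the identity with rational coefficients and clear denominators. Multiply every coefficient by a common denominator $D$, carry the resulting power of $D$ through the recursion, and divide exactly at the end. Alternatively, I will arrange that the identity computes $D\cdot AB$ with integer coefficients; then the recursion computes $D^k\cdot AB$, and a final exact division costs $O(n^2)$ operations. Intermediate integer sizes do not matter, since we count arithmetic operations.

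The first stage is where the content lies. I would start from the $(c+2)$-for-$c$ polynomial packing built from~\cite[Lemma~4.2]{original}. Concretely, this is a way to compute $c+2$ products of univariate polynomials, modulo a degeneration parameter $\varepsilon$, using $c$ polynomial products of linear combinations. Next I would compose packings by tensoring, i.e.\ by substituting packings into multivariate polynomial products. This amplifies the compression rate: $s$-fold composition handles $(c+2)^s$ products with $c^s$ multiplications. Then polynomial multiplication is converted into matrix multiplication. For this I would use the standard embedding of matrix multiplication into multivariate polynomial multiplication via monomial exponents, so that the coefficient of a designated monomial records $\sum_j a_{ij}b_{jk}$, combined with a Schönhage-type argument that many independent matrix products yield a single square one. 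The parameters $c$ and $s$ will be chosen so that the exponent tends to $9/4$ as $c,s\to\infty$.

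I expect the main obstacle to be that the packings are \emph{degenerations}: they are exact only in the limit $\varepsilon\to0$. To get an honest identity, I would use Bini's interpolation trick. Suppose the error is a polynomial in $\varepsilon$ of degree $e$. Evaluate the approximate identity at $e+1$ distinct nonzero integer values of $\varepsilon$, and recover the $\varepsilon^0$ coefficient by Lagrange interpolation with rational weights. This multiplies the rank by $e+1$. After $s$-fold tensoring the degree grows only like $O(s)$, so the factor is polynomial in $s$ while the gain is exponential in $s$. That overhead is absorbed into $\gamma$ for $s$ large. Verifying that the degree bookkeeping really stays polynomial in $s$ under composition is the delicate point, and it is what I would check first.
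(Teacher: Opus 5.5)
Your second stage is fine: it is essentially the paper's derivation of Theorem~\ref{thm:main} from the fixed identity of Corollary~\ref{thm:identity}. Carrying $D^k$ to the end instead of dividing at each level is equally valid for operation counts.

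The gap is in the first stage, which carries all the content. The $(c+2)$-for-$c$ gadget is a \emph{same-left} packing with \emph{longer} sources. All targets share one left polynomial $A$, and the $c$ sources are $\operatorname{Full}_{a,b+t}$ with $ct\ge a+2b-1$. Its gain exists only because $A$ is shared. Indeed, $c+1$ independent scalar $\operatorname{Full}_{a,b}$ products already have border rank $(c+1)(a+b-1)$, whereas the $c$ sources have total rank $c(a+b+t-1)\ge(c+1)(a+b-1)+b$. So the tensor powers do not give ``$(c+2)^s$ products from $c^s$'' in any usable sense, for three reasons:
\begin{itemize}
\item the targets are not independent, so no Sch\"onhage-type argument applies;
\item the targets have mixed Full/Mid patterns while the sources are uniformly Full, so the packing cannot be iterated as is;
\item the targets cannot be renamed separately into a common shape while their left input is shared.
\end{itemize}
The paper needs two steps you do not have. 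First, Lemma~\ref{lem:sl} tensors the gadget with its $(23)$-dual, using $\operatorname{Full}^{23}=\operatorname{Mid}$. It keeps only targets with $d$ Full and $d$ Mid factors, losing at most a factor $d+1$. Second, Lemma~\ref{lem:sl-to-pp} unshares the left input. It encodes $A(i)=A_i$ and $B_i'(i+\varepsilon j)=B_{i,j}$ as polynomials in an auxiliary variable and runs the same-left packing at $2w-1$ points. This yields $w$ independent targets with $w$ columns from $2w-1$ sources with only $u=c^{2d}$ columns, and it is where a matrix dimension first shrinks.

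Your plan also never pays for the growth of the polynomial lengths, and that cost is what produces $9/4$. A one-shot monomial embedding of matrix multiplication is not how the paper proceeds, and you do not say how it would interact with the increase $b\to b+t$. In the paper, matrix multiplication stays in the matrix coefficients. The column reduction is tensored with its $(12)$ and $(13)$ role exchanges to give $Q_{a,b}=T_{a,b}\otimes T^{12}_{a,b}\otimes T^{13}_{a,b}$. This shape is ordinary matrix multiplication at $a=b=1$ and satisfies $Q_{a,b}\cong Q_{b,a}$ (Lemma~\ref{lem:round-shape}). A round (Lemma~\ref{lem:round}) therefore raises both lengths from $L$ to $L'\le\rho L+1$, with $\rho=1+3/(c-1)$. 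It shrinks the matrix side by $\Lambda=\lfloor w/u\rfloor^2$, which is $(1+2/c)^{4d}$ up to a factor polynomial in $d$, at the price of a factor $64$ in the count. After $k$ rounds from $L_0=1$, the sources are evaluated at effort $(2L_k-1)^{6d}=O(\rho^{6dk})$, so one needs $64\rho^{6d}<\Lambda^\tau$. Such a $d$ exists whenever $\tau>\frac{3\log\rho}{2\log(1+2/c)}$, and this threshold tends to $9/4$ as $c\to\infty$. Without this balanced growth of both lengths and its accounting, your claim that $c$ and $s$ can be chosen so the exponent tends to $9/4$ is unsupported.

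By contrast, the Bini step you flag as delicate is routine. The $\varepsilon$-degrees add under tensoring, and Lemma~\ref{lem:bini} is applied once, to a fixed identity.
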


The algorithm remains ``galactic'': the constants hidden in the
$O(\cdot)$ notation are enormous.

\subsection{Our techniques}

Our approach develops a framework of \emph{polynomial packings}:
methods for reducing many polynomial multiplications to fewer ones
using linear operations. We first isolate and generalize a construction
of OpenAI~\cite{original} to obtain a simple packing; we next show how
packings can be composed to amplify their ``compression rate", and
finally how they yield Strassen-like identities achieving the desired
exponent bound.

\paragraph{Polynomial packing.}
Informally, given $m$ pairs of polynomials $(A_i,B_i)$, an $m$-for-$n$
polynomial packing computes $C_i=A_iB_i$ for all $i$, or perhaps a
\emph{selected subset} of their coefficients using $n$ polynomial
multiplications of encoded inputs, with only linear operations for
encoding and recovery. We keep track of both the number of
multiplications and the sizes of the polynomials involved.

We first show how to construct a so-called \emph{same-left (SL)}
$(c+2)$-for-$c$ packing modulo a formal parameter
$\varepsilon$\footnote{Here $\varepsilon$ is an additional symbolic
variable. In other words, we use linear maps whose coefficients are
polynomials in $\varepsilon$ and $\varepsilon^{-1}$. After decoding,
the output is polynomial in $\varepsilon$, with the desired answer
as its constant term. Bini's standard interpolation argument~\cite{bini}
removes the parameter at the end; see Lemma~\ref{lem:bini}.},
generalizing the three-for-one construction of~\cite[Lemma~4.2]{original}.
Same-left here means that all the left polynomials $A_i$ are the same.
More precisely, consider univariate polynomials $A,P_1,\ldots,P_{c+1}$ with scalar
coefficients, where $A$ has length $a$ (degree less than $a$)
and each $P_i$ has length $b$.
Write $[z^k]H$ for the coefficient of $z^k$ in $H$,
taking coefficients outside its range to be zero.
We want to compute all coefficients of the $c+1$ products
$AP_1,\ldots,AP_{c+1}$, which we call \emph{full products}.
We denote the full-product operation with input lengths $a,b$ by
$\operatorname{Full}_{a,b}$, or simply Full.
Naively, this takes $c+1$ polynomial multiplications.
We can trivially reduce the number of polynomial multiplications to $c$ by replacing
$P_c,P_{c+1}$ with $P_c+z^{a+b-1}P_{c+1}$: the two products then
occupy disjoint coefficient ranges and can be read off separately.
But this encoded input has length $a+2b-1$, at least the combined
length $2b$ of the two original inputs. The smaller number of polynomial multiplications
therefore gives no saving in the total number of input coefficients.

Alternatively, we can use $c$ encoded right polynomials with just one
extra coefficient each. For $c\ge a+b-1$, set
\[
 E_j^{(0)}=P_j-zP_{j+1},\qquad 1\le j\le c.
\]
Each $E_j^{(0)}$ has length $b+1$. After multiplying these polynomials
by $A$, take the following telescoping sum:
\[
 \begin{aligned}
 \sum_{j=1}^c z^{j-1}AE_j^{(0)}
 &=(AP_1-zAP_2)+(zAP_2-z^2AP_3)+\cdots
   +(z^{c-1}AP_c-z^cAP_{c+1})\\
 &=AP_1-z^cAP_{c+1}.
 \end{aligned}
\]
This sum is computed by adding coefficients. Writing
$W_j=AE_j^{(0)}$ and $W$ for the displayed sum, we have
$[z^k]W=\sum_{j=1}^c[z^{k-j+1}]W_j$.
Thus forming $W$ is a linear decoding operation.
The intermediate products cancel, so the first $a+b-1$ coefficients
give $AP_1$, and the differences $AE_j^{(0)}=AP_j-zAP_{j+1}$ then
recover the remaining products successively. This still gives no saving
in the total number of right-input coefficients: $c(b+1)\ge(c+1)b$.

But notice the \emph{empty space} in the combined output: $AP_1$ ends before
position $a+b-1$, while $z^cAP_{c+1}$ starts at position $c$.
Taking $c\ge a+2b-1$ leaves at least $b$ zero coefficient positions
between them. We use this space to obtain an additional product output,
without making the encoded polynomials any longer.

So let us introduce an additional right-input polynomial $Q$ of length
$a+b-1$, aiming for a $(c+2)$-for-$c$ packing, and write
$Q=\sum_{j=0}^{c-1}q_jz^j$.

\begin{figure}[htbp]
\centering
\begin{tikzpicture}[x=1cm,y=1cm,font=\small]
 \definecolor{gapblue}{RGB}{34,100,145}
 \node[anchor=west] at (0,3.15) {Without $Q$};
 \filldraw[fill=black!6,draw=black!55] (0,2.25) rectangle (4.2,2.95);
 \filldraw[fill=black!2,draw=black!55,dashed] (4.2,2.25) rectangle (7.2,2.95);
 \filldraw[fill=black!6,draw=black!55] (7.2,2.25) rectangle (11.4,2.95);
 \node at (2.1,2.6) {$AP_1$};
 \node at (5.7,2.6) {zero coefficients};
 \node at (9.3,2.6) {$-z^cAP_{c+1}$};

 \node[anchor=west] at (0,1.65) {With $\varepsilon Q$};
 \filldraw[fill=black!6,draw=black!55] (0,.65) rectangle (4.2,1.45);
 \filldraw[fill=gapblue!13,draw=gapblue,thick] (4.2,.65) rectangle (7.2,1.45);
 \filldraw[fill=black!6,draw=black!55] (7.2,.65) rectangle (11.4,1.45);
 \node[align=center] at (2.1,1.05) {overlapping\\contributions};
 \node[align=center,text=gapblue] at (5.7,1.05)
   {middle of $\varepsilon AQ$\\shifted by $z^b$};
 \node[align=center] at (9.3,1.05) {overlapping\\contributions};

 \draw[densely dotted,black!45] (4.2,2.25) -- (4.2,1.45);
 \draw[densely dotted,black!45] (7.2,2.25) -- (7.2,1.45);
 \draw[densely dotted,black!45] (4.2,.65) -- (4.2,0);
 \draw[densely dotted,black!45] (7.2,.65) -- (7.2,0);
 \draw[->,black!65] (0,0) -- (11.8,0);
 \foreach \x in {0,4.2,7.2} \draw (\x,.07) -- (\x,-.07);
 \node[below] at (0,-.08) {$0$};
 \node[below] at (4.2,-.08) {$a+b-1$};
 \node[below] at (7.2,-.08) {$c=a+2b-1$};
 \node[anchor=north east,font=\footnotesize] at (11.8,-.55) {coefficient position};
\end{tikzpicture}
\caption{The telescoping sum before and after adding $\varepsilon Q$, shown for
$c=a+2b-1$. The $b$ highlighted coefficients of $\varepsilon z^bAQ$ fill the zero gap;
its remaining coefficients can overlap the endpoint products.}
\label{fig:telescoping-gap}
\end{figure}
We now make use of the formal parameter $\varepsilon$ and modify
the encoding to:
\[
 E_j=P_j-zP_{j+1}+\varepsilon z^b q_{j-1},\qquad 1\le j\le c.
\]
\begin{quote}
\emph{That is, we add just one coefficient of $\varepsilon Q$ to each $E_j$, at position $b$.
Each $P_j$ ends at position $b-1$, but the shifted $zP_{j+1}$
already reaches position $b$: the added coefficient overlaps with
its last coefficient.}
\end{quote}
The same telescoping sum now gives:
\[
 \sum_{j=1}^c z^{j-1}AE_j
 =AP_1-z^cAP_{c+1}+\varepsilon z^bAQ.
\]
The coefficients at positions $a-1,\ldots,a+b-2$ of
$\varepsilon AQ$, shifted by $z^b$, lie entirely in the gap.
Dividing them by $\varepsilon$ recovers these coefficients of
$AQ$ exactly; we call this operation the \emph{middle product},
denoted $\operatorname{Mid}_{a,b}$, or simply Mid.\footnote{\label{footnote}In the formal
proof and the tensor formulas below, we shift the right-input and
output exponents of the middle product by $-(a+b-2)$. This only
relabels coefficients and leaves the left input unchanged.}
For the full products, use the same successive decoder as before
and reduce modulo $\varepsilon$: every contribution from $Q$
vanishes. Our so-called \emph{same-left primal packing} therefore
computes
\[
 \underbrace{(AP_i)_{i=1}^{c+1}}_{c+1\text{ full products}}
 \qquad
 \underbrace{\bigl([z^k](AQ)\bigr)_{k=a-1}^{a+b-2}}
 _{1\text{ middle product}}
 \qquad\pmod{\varepsilon}.
\]
Reduction modulo $\varepsilon$ means taking the constant coefficient
of the fully decoded result. The only division by $\varepsilon$
is applied to the middle output, which is divisible by
$\varepsilon$, so no negative powers remain.
Thus \emph{we compute $c+2$ polynomial-product outputs modulo
$\varepsilon$ using $c$ multiplications, with only one extra
coefficient in each encoded right polynomial}.

The $(c+1)$-for-$c$ saving alone is offset by the increased input
lengths; the additional middle product provides the improvement.
(Introducing $\varepsilon$ also has a cost, but its degree is only
one here. We use a single formal parameter throughout, which
Bini's standard technique (Lemma~\ref{lem:bini}) removes at the end
without affecting the exponent bound. We can thus ignore this
cost in the discussion below.)

More concretely, to see the gain and its cost, take $a=b=L\ge3$ and $c=3L$.
The encoded polynomials then have length $L+1=(1+3/c)L$.
In return, $c$ source products provide $c+1$ full outputs and one
middle output, giving an output-count gain of $(c+2)/c=1+2/c$.
(As we shall see below, we can amplify this gain multiplicatively
by applying the same method in multiple variables.)

\paragraph{Exchanging the roles in a packing.}
A polynomial product is linear in either input when the other is fixed.
Write its \emph{shape} as $P=(\mathcal I,\mathcal J,\mathcal K)$,
the allowed monomials in the left input, right input, and output,
which we call roles $1,2,3$. Exchanging two roles gives
\[
 \begin{aligned}
 P^{12}&=(\mathcal J,\mathcal I,\mathcal K),\\
 P^{23}&=(\mathcal I,\mathcal K^{-1},\mathcal J^{-1}),\\
 P^{13}&=(\mathcal K^{-1},\mathcal J,\mathcal I^{-1}),
 \end{aligned}
\]
where $\mathcal S^{-1}=\{u^{-1}:u\in\mathcal S\}$ inverts monomial
labels, so $z^k$ becomes $z^{-k}$. These inverses preserve the
multiplication relation: $uv=w$ becomes $uw^{-1}=v^{-1}$ or
$w^{-1}v=u^{-1}$.

For a packing with independent inputs, applying the same exchange
to every target and source gives another such packing with the same
counts and the same exactness
or validity modulo $\varepsilon$. The exchange $(12)$ swaps the
input roles; the input--output exchanges follow from the standard
transposition principle~\cite[Chapter~13]{bcs97}.
The exchange $(23)$ also preserves the same-left requirement because
it keeps the left-input role fixed. We call this exchange
\emph{duality}, write $\widehat P=P^{23}$, and recover the original
product by applying it twice.
We next observe that\footnote{Formally, this relies on the coefficient shift noted in footnote \ref{footnote}.}
\[
 \operatorname{Full}_{a,b}^{23}=\operatorname{Mid}_{a,b}.
\]
Thus our primal packing gives a dual same-left packing, modulo
$\varepsilon$, of $c+1$ middle products and one full product at
lengths $a,b$, using $c$ middle products at lengths $a,b+1$.

\paragraph{Amplifying the gain: Multivariate Packing.}
Running the primal and dual packings side by side gives $c+2$ full
and $c+2$ middle outputs from $c$ full and $c$ middle source products.
Their counts add, so the gain remains $2(c+2)/(2c)=(c+2)/c$.
To multiply the gain, we move to multivariate polynomials and apply
the packings in separate variables. In two variables $x,y$, one
primal step in $x$ and one dual step in $y$ give $(c+2)^2$ outputs
from $c^2$ source products;
continuing in additional variables multiplies the gain further.
To see why, start with a common left polynomial $A(x,y)$ of length
$a$ in each variable and $(c+2)^2$ independent right polynomials
$B_1(x,y),B_2(x,y),\ldots$. Arrange the right polynomials in a
$(c+2)\times(c+2)$ table, writing $B_{i,j}$ for the polynomial
in row $i$ and column $j$. We want coefficients of each product
$AB_{i,j}$. For rows $1,\ldots,c+1$, we want Full in $x$;
for the last row, Mid in $x$. For columns $1,\ldots,c+1$, we
want Mid in $y$; for the last column, Full in $y$.
Each $B_{i,j}$ has length $b$ in a variable where we want Full
and length $a+b-1$ where we want Mid.
Figure~\ref{fig:bivariate-table} shows every row and column for $c=4$.
\begin{figure}[htbp]
\centering
\begin{tikzpicture}[x=.63cm,y=.60cm,font=\scriptsize]
\definecolor{pairFM}{RGB}{226,239,249}
\definecolor{pairFF}{RGB}{253,236,211}
\definecolor{pairMM}{RGB}{229,241,224}
\definecolor{pairMF}{RGB}{239,228,245}
\node[font=\small] at (3,7.9) {Right inputs $B_{i,j}$};
\node[font=\small] at (13,7.9) {Encoded inputs $U_{r,j}$};
\node[font=\small] at (22,7.9) {Encoded inputs $V_{r,s}$};
\node at (3,7.2) {$6\times6$};
\node at (13,7.2) {$4\times6$};
\node at (22,7.2) {$4\times4$};
\foreach \j in {1,...,6} {
  \pgfmathsetmacro{\xx}{\j-.5}
  \ifnum\j<6
    \node at (\xx,6.5) {Mid};
    \node at ({10+\xx},6.5) {Mid};
  \else
    \node at (\xx,6.5) {Full};
    \node at ({10+\xx},6.5) {Full};
  \fi
}
\foreach \i in {1,...,6} {
  \ifnum\i<6
    \node[anchor=east] at (-.15,{6.5-\i}) {Full};
  \else
    \node[anchor=east] at (-.15,{6.5-\i}) {Mid};
  \fi
  \foreach \j in {1,...,6} {
    \ifnum\i<6
      \ifnum\j<6 \def\cellcolor{pairFM}\else\def\cellcolor{pairFF}\fi
    \else
      \ifnum\j<6 \def\cellcolor{pairMM}\else\def\cellcolor{pairMF}\fi
    \fi
    \filldraw[fill=\cellcolor,draw=gray!65] ({\j-1},{6-\i}) rectangle (\j,{7-\i});
    \node at ({\j-.5},{6.5-\i}) {$B_{\i,\j}$};
  }
}
\foreach \r in {1,...,4} {
  \node[anchor=east] at (9.85,{6.5-\r}) {Full};
  \node[anchor=east] at (19.85,{6.5-\r}) {Full};
  \node at ({19.5+\r},6.5) {Mid};
  \foreach \j in {1,...,6} {
    \ifnum\j<6 \def\cellcolor{pairFM}\else\def\cellcolor{pairFF}\fi
    \filldraw[fill=\cellcolor,draw=gray!65] ({9+\j},{6-\r}) rectangle ({10+\j},{7-\r});
    \node at ({9.5+\j},{6.5-\r}) {$U_{\r,\j}$};
  }
  \foreach \ss in {1,...,4} {
    \filldraw[fill=pairFM,draw=gray!65] ({19+\ss},{6-\r}) rectangle ({20+\ss},{7-\r});
    \node at ({19.5+\ss},{6.5-\r}) {$V_{\r,\ss}$};
  }
}
\draw[->,thick] (6.25,4) -- (8.65,4)
 node[midway,above,align=center,font=\tiny] {Primal in $x$\\each column};
\draw[->,thick] (16.25,4) -- (18.65,4)
 node[midway,above,align=center,font=\tiny] {Dual in $y$\\each row};
\node[align=center] at (3,-.8) {Rows: requirement in $x$\\Columns: requirement in $y$};
\node[align=center] at (13,1) {Six inputs become four\\in each column};
\node[align=center] at (22,1) {Six inputs become four\\in each row};
\end{tikzpicture}
\caption{The bivariate construction for $c=4$. Each original column
contains five Full inputs and one Mid input in $x$; each original
row contains five Mid inputs and one Full input in $y$.
After encoding, compute the sixteen products $AV_{r,s}$, Full in
$x$ and Mid in $y$, at the longer source lengths. Decode in $y$
and then in $x$ to recover all thirty-six requested outputs modulo
$\varepsilon$.}
\label{fig:bivariate-table}
\end{figure}
For example, each of the $(c+1)^2$ upper-left products requests
$[x^k y^\ell](AB_{i,j})$ for $0\le k<a+b-1$ and
$a-1\le\ell<a+b-1$; its right input has lengths $(b,a+b-1)$.
Every column supplies the $c+1$ full-range inputs and one
middle-range input needed by the primal packing in $x$.
Every row supplies the $c+1$ middle-range inputs and one
full-range input needed by the dual packing in $y$.

As above, take $c\ge a+2b-1$. In each column, apply the primal
encoder in $x$, viewing the coefficients as polynomials in $y$.
Within each column, all inputs have the same $y$-length bound.
The encoder takes linear combinations of their $x$-coefficients
with weights independent of $y$, so this bound is preserved.
This leaves $c$ rows, each still containing the required $c+1$
Mid inputs and one Full input in $y$. In each resulting row,
apply the dual encoder in $y$, viewing the
coefficients as polynomials in $x$. We now have a $c\times c$
table of encoded right polynomials, each of length $b+1$ in $x$
and $a+b$ in $y$.
Multiply each by $A$, computing all $x$-coefficients and the
$y$-coefficients at exponents $a-1,\ldots,a+b-1$. Apply the dual
decoder in $y$, then the primal decoder in $x$.

Encoding in $x$ and then in $y$, computing the source products,
and decoding in reverse order recovers all the requested outputs
modulo $\varepsilon$. This follows by applying the correctness of
the two univariate packings along their separate coefficient indices.

We therefore recover $(c+2)^2$ outputs from $c^2$ source products.
Continuing this process in additional variables raises the
univariate compression rate to higher powers. Use $2d$ variables
$z_1,\ldots,z_{2d}$, applying the primal packing in the first $d$
and the dual packing in the last $d$. Encode in each variable,
compute the resulting products, and decode in reverse order. This gives
$(c+2)^{2d}$ outputs from $c^{2d}$ source products, an output-to-source
count ratio of $((c+2)/c)^{2d}$. Accounting also for the longer
source polynomials gives the compression rate defined below.

However, there is one issue: after the above operations, the source products are all Full in the first $d$
variables and Mid in the last $d$, whereas the recovered
outputs can have different numbers of variables with Full and Mid requirements.
Write $P\otimes Q$ for combining two product shapes on separate
variable lists, and $P^{\otimes d}$ for $d$ copies.
Put
\[
 T_{a,b}
 =(\operatorname{Full}_{a,b}\otimes\operatorname{Mid}_{a,b})^{\otimes d}.
\]
The sources have this form at the longer source lengths. To ensure that source computations and target computations have the same ``shape", only  keep
targets with Full in exactly $d$ variables and Mid in the other $d$, so their shapes
agree with $T_{a,b}$ up to renaming variables. We refer to such products
as \emph{balanced}.
For example, in the two-variable table, the yellow and green
outputs require Full in both variables or Mid in both, respectively, so
they do not qualify. The $(c+1)^2$ blue outputs and the single
purple output have one of each: $26$ of the $36$ outputs when
$c=4$.
With $2d$ variables, we recover at least
\[
 \frac{(c+2)^{2d}}{d+1}
\]
outputs of this shape.\footnote{To see why, consider the following
equivalent counting question. Form a word whose first $d$ letters
are chosen from $\{\mathsf F_1,\ldots,\mathsf F_{c+1},\mathsf M\}$
and whose last $d$ letters are chosen from
$\{\mathsf F,\mathsf M_1,\ldots,\mathsf M_{c+1}\}$.
All $\mathsf F$-letters have type Full and all $\mathsf M$-letters
have type Mid; subscripts distinguish different letters of the same type.
How many words contain exactly $d$ letters of each type?
Cauchy--Schwarz gives at least $(c+2)^{2d}/(d+1)$ such words;
see Claim~\ref{lem:count-words}.}
Thus requiring this common shape costs at most a factor $d+1$
in compression. 
(Note that even in a balanced product, the Full and Mid requirements may be assigned to different
variables; the next step lets us deal with this.)

\paragraph{Removing the same-left restriction.}
So far, the same-left polynomial packing (SL-PP) computes $w$ retained
targets sharing one left input $A$. Index them by $i=0,\ldots,w-1$.
We now want independent pairs $A_i,B_i$. Write each right input as
\[
 B_i=[\,B_{i,0}\ \cdots\ B_{i,w-1}\,],
\]
where each $B_{i,j}$ is a scalar polynomial and the brackets concatenate
them into a polynomial with row-vector coefficients of length $w$.
The entries of each
$B_i$ are all multiplied by the same scalar polynomial $A_i$.
We batch the $A_i$ into one auxiliary univariate polynomial
$A(\xi)$ satisfying $A(i)=A_i$. For each $i$, encode the column
blocks $B_{i,j}$ in a univariate polynomial $B_i'(\xi)$ satisfying
$B_i'(i+\varepsilon j)=B_{i,j}$. Both encodings have degree at most
$w-1$, so the requested coefficients of $A(\xi)B_i'(\xi)$ are
polynomials of degree at most $2w-2$ in $\xi$.
We recover these polynomials by interpolation from $2w-1$ common
evaluation points. Fix one such point $k$. We need the requested
coefficients of the products
$A(k)B_0'(k),\ldots,A(k)B_{w-1}'(k)$.
Their right inputs differ, but their left input is the single value
$A(k)$ of the polynomial encoding all the $A_i$. Thus one application
of the same-left packing computes all these products' requested
coefficients modulo $\varepsilon$. We repeat this for each of the
$2w-1$ evaluation points; $A(k)$ may change between applications.

The points $i+\varepsilon j$ were chosen for two reasons.
They are distinct, allowing $B'_i$ to take the independently
prescribed values $B_{i,j}$, but they are all congruent to $i$
modulo $\varepsilon$. Thus they distinguish the right inputs
while keeping the left input equal to $A_i$ modulo $\varepsilon$,
since $A(i)=A_i$.
Evaluating the recovered coefficient polynomials at these points
therefore gives the requested coefficients of $A_iB_{i,j}$
modulo $\varepsilon$.\footnote{The formal proof replaces the
original packing's parameter by $\varepsilon^w$, ensuring that
divisions by $\varepsilon$ in the right-input encoding leave
the error divisible by $\varepsilon$.}
Concatenating these output blocks recovers the requested
coefficients of $A_iB_i$ for each $i$.
(OpenAI~\cite{original} proves a related result using finite
Fourier projection followed by scaling variables by powers of
$\varepsilon$ to eliminate mismatched labels.)

At each interpolation point, all $c^{2d}$ source products share
the same left polynomial, so we concatenate their right inputs into
a single row. The resulting packing computes $w$ targets with $w$
columns using $2w-1$ sources with $c^{2d}$ columns. For sufficiently
large $d$, $w>c^{2d}$, giving the desired reduction in the column dimension.

The left inputs are now independent, so we can rename variables
separately in each target to put all targets in the common form
$T_{a,b}$. To obtain
a recursive matrix-multiplication algorithm, we still need to reduce
the row and inner dimensions.

\paragraph{A Strassen-like identity.}
To provide the other dimension reductions, we now tensor the column reduction with two
role-exchanged copies: exchanging the inputs gives a row reduction,
and exchanging the left input with the output gives an inner-dimension
reduction. Their tensor product reduces all three matrix dimensions
simultaneously.

The resulting shape is
\[
 Q_{a,b}=T_{a,b}\otimes T_{a,b}^{12}\otimes T_{a,b}^{13}.
\]
This step reduces all three matrix dimensions, but it still increases
only the polynomial-length parameter $b$. Keeping $a$ fixed while
$b$ grows would eventually cancel the asymptotic saving. We therefore
need a way to increase $a$ as well.

For Full, exchanging the lengths $a$ and $b$ is exactly exchanging
the two input roles. Consequently, in the combined product $Q_{a,b}$, this exchange
merely reorders the factors, which amounts to renaming polynomial
variables.\footnote{For example,
$\operatorname{Mid}_{a,b}^{13}
=(\operatorname{Full}_{a,b}^{23})^{13}
=(\operatorname{Full}_{a,b}^{12})^{23}
=\operatorname{Mid}_{b,a}$,
so the Mid factor in $T_{a,b}$ pairs with the one in $T_{a,b}^{13}$.}
More generally, the gadget permits an integer shift $t\ge1$, increasing
the source length from $b$ to $b+t$, whenever $ct\ge a+2b-1$.
Thus, after increasing $b$, we can exchange the two parameters
by renaming variables and apply the construction again.
Starting with equal lengths $a=b=L$, apply one step to increase $b$
to $L'=L+t$.\footnote{For $c\ge2$, take
$t=\lceil(3L-1)/(c-1)\rceil$. Then $ct\ge3L+t-1$, so the
shift condition holds at both $(a,b)=(L,L)$ and $(L+t,L)$; moreover,
$L'\le(1+3/(c-1))L+1$.}
Rename the resulting product using this symmetry, then apply the step again:
\[
 Q_{L,L}\longrightarrow Q_{L,L'}
 \xrightarrow{\text{rename variables}} Q_{L',L}\longrightarrow Q_{L',L'}.
\]
The first and last arrows reduce all three matrix dimensions; the middle
arrow only renames variables. Together these two steps form
a \emph{round}, increasing both polynomial lengths from $L$ to $L'$.
Its sources have the same equal-length form as its targets, so rounds
can be repeated.

The resulting product is ordinary matrix multiplication at $a=b=1$,
and at $a=b=L$ can be computed using $O(L^{6d})$ matrix multiplications.
For each fixed integer $c\ge2$, choosing sufficiently
many variables and rounds gives any exponent strictly above
\[
 \frac32\,\frac{\log(1+3/(c-1))}{\log(1+2/c)}.
\]
This threshold tends to $9/4$ as $c\to\infty$, as shown in the
application of Theorem~\ref{thm:packing-identity} in the proof of
Corollary~\ref{thm:identity}.
After finitely many rounds, block recursion and interpolation
(to remove $\varepsilon$, as in~\cite{bini}) give a fixed exact
rational identity with this saving.
Clearing its denominators and applying ordinary block recursion gives
our integer-arithmetic algorithm.

\section{Polynomial packing}\label{sec:interpolation}
We here introduce our notion of polynomial packing, and show some simple ways of composing them. We start by recalling some standard notation.


\paragraph{Polynomials and their coefficients.}
We use polynomials with matrix coefficients over a fixed field of
characteristic zero; scalar coefficients are the $1\times1$ case.
An ordinary univariate polynomial has \emph{length} $a$ if
its degree is less than $a$. For every integer $k$, we write
$[z^k]A$ for the coefficient of $z^k$ in $A$, which is zero
when that monomial is absent. We represent a
polynomial of length $a$ by its $a$ coefficients at exponents
$0,\ldots,a-1$, including zero coefficients.

We also allow monomials $u=z_1^{i_1}\cdots z_m^{i_m}$ with integer
exponents $i_1,\ldots,i_m\in\mathbb Z$, and finite linear combinations
of such monomials. We write $[u]A$ for the coefficient of $u$ in $A$.
The constant monomial is $1$.

For sets of monomials $\mathcal I,\mathcal J$, write
\[
 \mathcal I\mathcal J=\{uv:u\in\mathcal I,\ v\in\mathcal J\},
 \qquad
 \mathcal I^{-1}=\{u^{-1}:u\in\mathcal I\}.
\]
We use $\prod_i\mathcal I_i$ for the iterated set product.

A permutation $\pi$ of $\{1,\ldots,m\}$ acts on monomials by
replacing each variable $z_i$ with $z_{\pi(i)}$:
\[
 \pi\!\left(z_1^{e_1}\cdots z_m^{e_m}\right)
 =z_{\pi(1)}^{e_1}\cdots z_{\pi(m)}^{e_m}.
\]
For a set of monomials, define
$\pi(\mathcal I)=\{\pi(u):u\in\mathcal I\}$.

\paragraph{Linear maps.}
A linear map on coefficient lists takes fixed scalar linear
combinations of their entries. It is represented by a matrix $E$,
and we write $E(x)=Ex$. For matrix coefficients, the coordinates
are their scalar entries. Coefficient extraction, shifts, matrix
transposition, and rearranging or concatenating matrix blocks are
linear operations.

\subsection{Selected matrix-polynomial products}\label{sec:selected-products}
Our computations multiply two matrix polynomials with prescribed
allowed monomials and retain a prescribed set of product coefficients.

\begin{definition}[Selected matrix-polynomial product]\label{def:computation}
Let $m\ge1$, put $\mathbf z=(z_1,\ldots,z_m)$, and let
$\mathcal I,\mathcal J,\mathcal K$ be finite sets of monomials
in $\mathbf z$ with integer exponents.
For positive integers $p,q,r$, the inputs are
\[
 A(\mathbf z)=\sum_{u\in\mathcal I}A_u u,
 \qquad
 B(\mathbf z)=\sum_{v\in\mathcal J}B_v v,
\]
where $A_u=[u]A$ and $B_v=[v]B$ are arbitrary matrices of
sizes $p\times q$ and $q\times r$, respectively. Coefficients outside
$\mathcal I,\mathcal J$ are zero.
The \emph{selected matrix-polynomial product}
$\operatorname{Prod}_{\mathcal I,\mathcal J,\mathcal K}[p,q,r]$
returns
\begin{equation}\label{eq:selected-product}
 \operatorname{Prod}_{\mathcal I,\mathcal J,\mathcal K}[p,q,r](A,B)
 =\sum_{w\in\mathcal K}[w](AB)\,w.
\end{equation}
Its output coefficients have size $p\times r$. We also represent
the output by its coefficient list indexed by $\mathcal K$ in a
fixed order.
We call $P=(\mathcal I,\mathcal J,\mathcal K)$ its \emph{shape}, and
write $P[p,q,r]$ for the corresponding selected product.
The monomial sets are independent of $p,q,r$, and we omit these
dimensions when they are clear from context.
\end{definition}

Ordinary matrix multiplication is the case
$\mathcal I=\mathcal J=\mathcal K=\{1\}$.
At the other extreme, taking matrix dimensions $[1,1,1]$ and
$\mathcal K=\mathcal I\mathcal J$
gives ordinary scalar polynomial multiplication.

\paragraph{Role exchanges and duality.}
We call the left input, right input, and output roles $1,2,3$, respectively.
A \emph{role exchange} swaps two of these roles. Input monomials $u,v$
contribute to an output monomial $w$ when $uv=w$. Exchanging an input
with the output inverts both monomial labels: the relation becomes
$uw^{-1}=v^{-1}$ or $w^{-1}v=u^{-1}$.
The matrix dimensions follow the shared indices: $p$ links the left
input and output, $q$ links the two inputs, and $r$ links the right
input and output. Thus the three exchanges are
\[
 \begin{aligned}
 \bigl(\operatorname{Prod}_{\mathcal I,\mathcal J,\mathcal K}[p,q,r]\bigr)^{12}
 &=\operatorname{Prod}_{\mathcal J,\mathcal I,\mathcal K}[r,q,p],\\
 \bigl(\operatorname{Prod}_{\mathcal I,\mathcal J,\mathcal K}[p,q,r]\bigr)^{23}
 &=\operatorname{Prod}_{\mathcal I,\mathcal K^{-1},\mathcal J^{-1}}[q,p,r],\\
 \bigl(\operatorname{Prod}_{\mathcal I,\mathcal J,\mathcal K}[p,q,r]\bigr)^{13}
 &=\operatorname{Prod}_{\mathcal K^{-1},\mathcal J,\mathcal I^{-1}}[p,r,q].
 \end{aligned}
\]
We call $P^{23}$ the \emph{dual} of $P$, written $\widehat P$;
it keeps the left support unchanged.
For successive exchanges, $P^\sigma$, $\sigma\in S_3$, denotes the
corresponding permutation of $(\mathcal I,\mathcal J,\mathcal K^{-1})$,
moving the support in position $i$ to position $\sigma(i)$, with
the resulting third support inverted to obtain the output support.
Matrix dimensions are permuted by the exchanges above.
Successive exchanges satisfy $(P^\sigma)^\tau=P^{\tau\sigma}$
for $\sigma,\tau\in S_3$; in particular, $\widehat{\widehat P}=P$.

\paragraph{Full and middle products.}
Throughout the polynomial constructions, write $s=a+b-1$.
For positive integers $a,b$, define the univariate full-product shape by
\[
 \operatorname{Full}_{a,b}
 :=\bigl(\{z^i:0\le i<a\},\ \{z^j:0\le j<b\},
          \ \{z^k:0\le k<a+b-1\}\bigr).
\]
Define the middle-product shape as its dual:
\begin{equation}\label{eq:full-mid-exchange}
 \operatorname{Mid}_{a,b}:=\widehat{\operatorname{Full}}_{a,b}.
\end{equation}
Attaching matrix dimensions $[p,q,r]$ gives the selected products
$\operatorname{Prod}_{\mathcal I,\mathcal J,\mathcal K}[p,q,r]$
with the supports in the following table:
\[
 \begin{array}{c|ccc}
  \text{Map} & \mathcal I\text{ (left)} & \mathcal J\text{ (right)}
             & \mathcal K\text{ (output)}\\ \hline
  \operatorname{Full}_{a,b}
   & \{z^i:0\le i<a\} & \{z^j:0\le j<b\} & \{z^k:0\le k<s\}\\
  \operatorname{Mid}_{a,b}
   & \{z^i:0\le i<a\} & \{z^j:1-s\le j\le0\} & \{z^k:1-b\le k\le0\}
 \end{array}
\]
Full returns all $s$ coefficients of $AB$; Mid returns
$[z^{1-b+j}](AQ)$ for $0\le j<b$, in that order.
This is a shifted convention for the usual middle
product~\cite[Section~4]{bls03}.

\subsection{Packings and same-left packings}
We are now ready to state our definition of polynomial packing, which formalizes how to ``pack'' a list of selected matrix-polynomial products
into another list using linear operations.

\begin{definition}[Polynomial packing]\label{def:packing}
Let $T=(T_i)$ and $S=(S_j)$ be finite lists of selected products.
An \emph{exact polynomial packing} (PP) of $T$ using $S$ consists of
rational linear maps $E^L,E^R$ from the target left- and right-input
coefficient lists to the corresponding source input lists, and a
rational linear decoding map $D$ from source output lists to target output lists.
For all allowed target inputs $A=(A_i),B=(B_i)$, they satisfy
\begin{equation}\label{eq:packing}
 D\!\left(\bigl(S_j((E^L(A))_j,(E^R(B))_j)\bigr)_j\right)
 =\bigl(T_i(A_i,B_i)\bigr)_i.
\end{equation}
The left-hand side is the output of the packing; its difference from
the target output is the error.
When all targets and sources have the same left polynomial space,
a \emph{same-left polynomial packing} (SL-PP) requires the identity
only when all $A_i=A$, and requires the left encoder to copy $A$:
\[
 \bigl(E^L(A,\ldots,A)\bigr)_j=A
 \qquad\text{for every source }j.
\]
A \emph{packing modulo $\varepsilon$} allows the three maps to have
coefficients in $\Q[\varepsilon,\varepsilon^{-1}]$ and adds an error
$\varepsilon R_\varepsilon(A,B)$ to the right-hand side
of~\eqref{eq:packing}, where $R_\varepsilon$ is polynomial in
$\varepsilon$ and the scalar input entries.
\end{definition}

\paragraph{Effort and Compression Rate.}

For a shape $P=(\mathcal I,\mathcal J,\mathcal K)$
with nonempty supports, let $a_k$ be the largest exponent of $z_k$
in $\mathcal I$ minus the smallest, plus one (i.e., the length of the
exponent range). Define $b_k$ and $h_k$
in the same way using $\mathcal J$ and $\mathcal K$.
We define a notion of ``effort" for the selected product, as follows: 
\begin{equation}\label{eq:effort}
 \begin{aligned}
 e_k&=\min\{a_k+b_k-1,\ a_k+h_k-1,\ b_k+h_k-1\},\\
 \operatorname{eff}(P)&=\prod_{k=1}^m e_k,
 \qquad
 \operatorname{eff}(P[p,q,r])=pqr\operatorname{eff}(P).
 \end{aligned}
\end{equation}
If a support is empty, set the effort to zero.
For Full and Mid, the three range lengths are $(a,b,s)$ and
$(a,s,b)$, where $s=a+b-1$. In both cases $e_1=s$, so their
effort is $spqr$.
The \emph{compression rate} of a packing is its total target effort
divided by its total source effort, i.e.
$\frac{\sum_i\operatorname{eff}(T_i)}{\sum_j\operatorname{eff}(S_j)}$,
when the denominator is positive.

The above notion of effort is motivated by the following standard result, showing that effort bounds the number of scalar multiplications needed to compute the product.
To formalize this, we consider the notion of a
\emph{rational bilinear algorithm}~\cite[Chapter~14]{bcs97}.
A map $F$ admits such an algorithm using $r$ scalar multiplications
if there exist scalar-valued rational linear maps $\ell_\nu,m_\nu$
and fixed outputs $W_\nu$ with rational coefficients such that,
for all allowed inputs $A,B$,
$F(A,B)=\sum_{\nu=1}^r\ell_\nu(A)m_\nu(B)W_\nu$.
An algorithm modulo $\varepsilon$ allows coefficients in
$\Q[\varepsilon,\varepsilon^{-1}]$ and returns
$F(A,B)+\varepsilon R_\varepsilon(A,B)$, where $R_\varepsilon$ is
polynomial in $\varepsilon$ and the scalar input entries.
For selected products, a bilinear algorithm using $r$ scalar
multiplications is precisely a PP with $r$ scalar multiplication
sources; the same correspondence holds modulo $\varepsilon$.

\begin{lemma}[Evaluation bound; cf.~{\cite[Section~4]{bls03}}]\label{lem:evaluation}
Every selected product $P$ admits an exact rational bilinear algorithm
using at most $\operatorname{eff}(P)$ scalar multiplications.
\end{lemma}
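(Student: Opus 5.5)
The plan is to reduce to the univariate case, handle the three choices in the minimum defining $e_k$ separately, and then tensor over the variables and the matrix dimensions. First normalize. Multiplying $A$ by a monomial $u_0$ and $B$ by a monomial $v_0$, and relabeling the outputs by $u_0v_0$, is a coefficient relabeling, so it is linear at no cost. After such shifts we may assume each support lies in a box: exponent ranges $[0,a_k)$ for $\mathcal I$, $[0,b_k)$ for $\mathcal J$, and an interval of length $h_k$ for $\mathcal K$. Enlarging the supports to these boxes (padding inputs with zeros, discarding extra outputs) only makes the task harder. So it suffices to treat the box shape, whose range lengths are the same $(a_k,b_k,h_k)$ and hence whose effort is the same.

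The matrix dimensions factor out. Write the product as $\sum_{u,v,w}[uv=w]A_uB_v w$. A bilinear algorithm for the scalar box shape with $r$ multiplications, say $\sum_\nu \ell_\nu(A)m_\nu(B)W_\nu$, lifts to $p\times q$ and $q\times r$ matrix coefficients as follows: each product $\ell_\nu(A)m_\nu(B)$ becomes a $p\times q$ by $q\times r$ matrix product, and computing it naively costs $pqr$ scalar multiplications. Thus it is enough to give a scalar algorithm using $\prod_k e_k$ multiplications.

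For the box shape, the scalar trilinear form is $\sum [i_k+j_k=\ell_k\ \forall k]\,\alpha_i\beta_j\gamma_\ell$. It factors as a tensor product over $k$ of univariate trilinear forms with range lengths $(a_k,b_k,h_k)$. Bilinear algorithms tensor: the Kronecker product of rank decompositions is a rank decomposition of the product, and its rank is the product of the ranks. So it suffices to show the univariate shape with ranges $(a,b,h)$ has rank at most $\min\{a+b-1,\,a+h-1,\,b+h-1\}$.

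For the univariate bound, consider $a+b-1$ first. The full product of lengths $a,b$ is computed by evaluating both inputs at $a+b-1$ distinct rationals, multiplying pointwise, and interpolating. All steps are rational linear maps, and selecting the $\mathcal K$ coefficients is one more linear map. For the other two minima, use the role exchanges from Section~\ref{sec:selected-products}. A bilinear algorithm with $r$ multiplications is a rank-$r$ decomposition of the trilinear form. Permuting the roles, with inversion of labels, permutes the tensor factors; this is the transposition principle. Exchanging roles $2$ and $3$ turns our univariate shape into one with range lengths $(a,h,b)$; a full product of lengths $a,h$ covers it after shifting, and so gives rank at most $a+h-1$. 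Similarly the exchange $(13)$ gives $b+h-1$.

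The main obstacle is the bookkeeping in this last step. I must verify that the exchanged shape, after inversion and shifting, really sits inside a Full box of the claimed lengths, so that the supports are contained in the correct intervals. I must also check that the decomposition transfers with rational coefficients. Both reduce to the fact that a trilinear rank decomposition is symmetric under permuting the three factors.
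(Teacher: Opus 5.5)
Your proposal is correct and follows essentially the same route as the paper's proof. The shared steps are: interpolation at $a+b-1$ points for the first bound; the $(23)$ and $(13)$ role exchanges for the bounds $a+h-1$ and $b+h-1$; tensoring over the variables; zero-padding inputs and discarding outputs to pass between interval boxes and the actual supports; and $pqr$ scalar multiplications per product to handle matrix coefficients. The one thing to add is the trivial empty-support case (effort zero, output zero), which your normalization via range lengths does not cover and which the paper dispatches in its first line.
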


We also recall how interpolation removes the formal parameter,
following Bini~\cite{bini}. For completeness, we give a proof in
Appendix~\ref{app:block-recursion}.

\begin{lemma}[Bini's approximate-to-exact conversion~\cite{bini}]\label{lem:bini}
For each integer $s\ge1$, let $F_s(A,B)=AB$ for
$A,B\in\Q^{s\times s}$.
Let $v\ge2$, $r\ge1$ be integers and $\tau\in\mathbb R$.
If $F_v$ admits a rational bilinear algorithm modulo
$\varepsilon$ using $r<v^\tau$ scalar multiplications,
then there exists an integer $V\ge2$ such that $F_V$
admits an exact rational bilinear algorithm using fewer
than $V^\tau$ scalar multiplications.
\end{lemma}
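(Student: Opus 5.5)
The proof proceeds in two steps. First, convert the approximate algorithm into an exact one for a larger matrix size by interpolating in $\varepsilon$. Second, use block recursion so that the polynomial overhead from the interpolation is swamped by tensor powering.

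\textbf{Step 1: tensor power.} Given the algorithm modulo $\varepsilon$ for $F_v$ with $r$ multiplications, I first clear negative powers. Multiplying every $W_\nu$ (or every $\ell_\nu$) by a suitable power $\varepsilon^{h}$ gives an algorithm whose coefficients lie in $\Q[\varepsilon]$. It computes
\[
 \varepsilon^{h}AB+\varepsilon^{h+1}(\cdots),
\]
so the desired product is the coefficient of $\varepsilon^{h}$; here $h$ is a fixed integer. Taking the $N$-fold tensor power, i.e.\ block recursion $N$ times, gives a bilinear algorithm for $F_{v^N}$. It uses $r^N$ products of linear forms with coefficients in $\Q[\varepsilon]$ and outputs
\[
 \varepsilon^{Nh}AB+\varepsilon^{Nh+1}(\cdots).
\]
Each linear form $\ell$, $m$ and each output coefficient $W$ has degree $O(N)$ in $\varepsilon$, with the implied constant depending only on the base algorithm. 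So we need the coefficient of $\varepsilon^{Nh}$ in a product of polynomials in $\varepsilon$.

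\textbf{Step 2: extract the coefficient exactly.} In the expression $\sum_\nu \ell_\nu(A)m_\nu(B)W_\nu$, every term is a polynomial in $\varepsilon$ of degree at most $D=O(N)$. Write $\ell_\nu(A)=\sum_\alpha \varepsilon^\alpha \ell_{\nu,\alpha}(A)$, and similarly for $m_\nu$ and $W_\nu$. The coefficient of $\varepsilon^{Nh}$ is
\[
 \sum_\nu\sum_{\alpha+\beta+\gamma=Nh}\ell_{\nu,\alpha}(A)\,m_{\nu,\beta}(B)\,W_{\nu,\gamma}.
\]
This is an exact rational bilinear algorithm using at most $r^N\binom{Nh+2}{2}=r^N\cdot\mathrm{poly}(N)$ multiplications. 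Alternatively, one can evaluate at $D+1$ rational points and interpolate, giving $r^N(D+1)$ multiplications; either bound is fine. All coefficients are rational.

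\textbf{Step 3: absorb the polynomial factor.} Set $V=v^N$. We need $r^N\,\mathrm{poly}(N)<v^{N\tau}$. Since $r<v^\tau$, the ratio $(r/v^\tau)^N$ decays exponentially, so this holds for all sufficiently large $N$. Choosing such an $N$ gives $V\ge2$ and an exact algorithm for $F_V$ using fewer than $V^\tau$ multiplications.

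The main care point is Step 1. One must check that tensoring bilinear algorithms modulo $\varepsilon$ keeps the error term of higher $\varepsilon$-order. This follows from multiplying the expansions $\varepsilon^hX+\varepsilon^{h+1}Y$ block-wise. One must also check that the $\varepsilon$-degrees grow only linearly in $N$, which is what makes the overhead polynomial rather than exponential. Both are routine bookkeeping, and I expect no genuine obstacle.
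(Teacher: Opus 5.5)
Your proof is correct and is essentially the paper's argument: tensor $N$ copies to get side $v^N$ with $r^N$ products and $\varepsilon$-degree linear in $N$, remove $\varepsilon$ at a cost polynomial in $N$, and absorb that cost since $r<v^\tau$. Your interpolation alternative in Step~2 is exactly what the paper does, except that the paper skips your normalization: it evaluates the algorithm, with coefficients in $\Q[\varepsilon,\varepsilon^{-1}]$, directly at $Nd_\varepsilon+1$ nonzero rational values of $\varepsilon$ (where $d_\varepsilon$ is the $\varepsilon$-degree of the base output) and interpolates the constant term. The only slip is wording in Step~1: clearing negative powers needs separate powers of $\varepsilon$ on $\ell_\nu$, $m_\nu$, and $W_\nu$, since multiplying only the $W_\nu$ leaves negative powers in the linear forms; this still multiplies the output by a single $\varepsilon^h$, so nothing downstream changes.
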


\paragraph{Useful properties of packing.}
Polynomial packings are special cases of bilinear algorithms
\cite[Chapters~14--15]{bcs97} and inherit some of their useful
properties, which we recall here. For completeness, we provide the standard proofs in Appendix~\ref{app:operations}.

Discarding unneeded outputs and adding zero inputs let us match the
sizes of successive packings. Repeating a packing lets us process
several complete batches.

\begin{observation}[Elementary operations]\label{obs:elementary}
In a PP, we may zero-pad source matrices, delete targets, or add unused
sources. For every integer $k\ge1$, repeating a packing $k$ times
multiplies both counts by $k$ and preserves its compression rate when defined.
These operations preserve exactness or validity modulo $\varepsilon$.
Deleting targets while keeping the source list fixed also preserves SL-PPs.
\end{observation}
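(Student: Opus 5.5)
Each operation is checked directly against the packing identity~\eqref{eq:packing}. For each one I would write down the new encoders $E^L,E^R$ and the new decoder $D$, show that they are still linear with coefficients in $\Q$ (resp.\ $\Q[\varepsilon,\varepsilon^{-1}]$), and show that the left-hand side of~\eqref{eq:packing} still equals the target output, plus $\varepsilon$ times a polynomial error term in the modulo-$\varepsilon$ case.

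\emph{Zero-padding a source.} Suppose source $S_j=P[p,q,r]$ is replaced by $P[p',q',r']$ with $p'\ge p$, $q'\ge q$, $r'\ge r$. The new encoder embeds each encoded coefficient matrix as the top-left block and puts zeros elsewhere. Block multiplication shows that the top-left $p\times r$ block of every output coefficient of the padded product equals the corresponding output coefficient of the original product. The new decoder first extracts these blocks, which is a linear map, and then applies $D$. The composite is linear, and the error term is unchanged.

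\emph{Deleting targets.} Compose $D$ with the projection onto the remaining target outputs, and restrict $E^L,E^R$ to inputs in which the deleted targets' inputs are set to zero. Since each $T_i$ depends only on $(A_i,B_i)$, this is legitimate. If the source list is kept fixed, the left encoder still copies $A$, so the SL property survives.

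\emph{Adding unused sources.} Encode zero into the new sources and let $D$ ignore their outputs.

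\emph{Repeating $k$ times.} Take the disjoint union of $k$ copies, using block-diagonal encoders and decoder. Both target and source effort sums are multiplied by $k$, so their ratio is unchanged whenever the denominator is positive. For an SL-PP, all copies receive the same left input, so the copy property holds in every source.

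In each case the error is the old error, possibly projected or repeated blockwise, so it remains of the form $\varepsilon R_\varepsilon$ with $R_\varepsilon$ polynomial, and exact packings stay exact. The only delicate step is zero-padding. There one has to check that padding along the inner dimension $q$ contributes only zero products, and that the extraction map is the same for every input; both follow from the coefficientwise identity $[w](AB)=\sum_{uv=w}A_uB_v$.
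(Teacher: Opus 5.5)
The parts on zero-padding sources, adding unused sources, and repetition are fine and match the paper's proof: zero insertion with output projection, zero encoding with ignored outputs, and independent block-diagonal copies. The gap is in target deletion for SL-PPs, which is the one case the paper treats separately. You set \emph{all} inputs of the deleted targets to zero, including their left inputs, and then assert that the left encoder still copies $A$. But an SL-PP guarantees the packing identity, and the copying property $(E^L(A,\ldots,A))_j=A$, only on left-input lists whose entries all coincide. On a list such as $(A,\ldots,A,0,\ldots,0)$ the definition places no constraint on $E^L$.

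For example, a left encoder that hands every source the left input of the last target meets the copying requirement. Yet if that target is deleted and its left input zeroed, every source receives $0$, and the kept outputs are lost. So your new left encoder $A\mapsto E^L(A,\ldots,A,0,\ldots,0)$ need not be a copying map, and the identity need not hold for it.

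The fix is what the paper states explicitly: zero only the right inputs of the deleted targets, and give them the common left input $A$. The sources then receive $E^L(A,\ldots,A)$, so each source gets $A$ and the SL identity applies to the right-input list with zeros in the deleted positions. Projecting the decoder onto the kept targets returns their outputs, with the old error merely restricted and projected, so exactness or validity modulo $\varepsilon$ is preserved. With this correction your argument is complete and is essentially the paper's.
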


\paragraph{Tensoring.}
We recall the tensoring operator, specialized to our selected products.

\begin{definition}[Tensor product of selected products]\label{def:tensor}
When tensoring shapes, we first rename their variables into disjoint lists.
Let
\[
 P=\operatorname{Prod}_{\mathcal I,\mathcal J,\mathcal K}[p,q,r],
 \qquad
 P'=\operatorname{Prod}_{\mathcal I',\mathcal J',\mathcal K'}[p',q',r']
\]
be selected products in disjoint variable lists $\mathbf z$ and
$\mathbf z'$.
Define the tensor product by
\[
 P\otimes P'
 :=\operatorname{Prod}_{\mathcal I\mathcal I',\,
    \mathcal J\mathcal J',\,\mathcal K\mathcal K'}
       [pp',qq',rr'].
\]
Disjoint variable lists ensure that each product monomial uniquely
determines its two factors. Tensoring with ordinary matrix multiplication
multiplies the matrix dimensions and leaves the polynomial supports unchanged.
\end{definition}

\begin{lemma}[Tensoring packings {\cite[Chapters~14--15]{bcs97}}]\label{lem:combine}
Suppose PPs pack $(T_i)_{i=1}^m$ using $(S_j)_{j=1}^n$, and
$(T'_k)_{k=1}^{m'}$ using $(S'_\ell)_{\ell=1}^{n'}$.
Then there exists a PP with $mm'$ targets and $nn'$ sources,
\[
 (T_i\otimes T'_k)_{i,k}
 \quad\text{using}\quad
 (S_j\otimes S'_\ell)_{j,\ell}.
\]
Exactness or validity modulo $\varepsilon$ is preserved.
If the outputs of the two packings have degrees at most
$d_\varepsilon,d'_\varepsilon$ in $\varepsilon$, respectively,
the output of the combined packing has degree
at most $d_\varepsilon+d'_\varepsilon$.
The same holds for SL-PPs.
For all selected products $P,P'$,
\begin{equation}\label{eq:effort-tensor}
 \operatorname{eff}(P\otimes P')
 =\operatorname{eff}(P)\operatorname{eff}(P').
\end{equation}
Consequently, tensoring multiplies compression rates.
\end{lemma}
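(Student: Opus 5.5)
The plan is to build the combined packing as the composition of two ``partial'' tensorings, one acting in the first factor and one in the second, in the standard way bilinear algorithms are tensored.

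\emph{Encoders.} Fix target inputs $A_{i,k}$, $B_{i,k}$ for $T_i\otimes T'_k$. Each is a polynomial in the disjoint variable lists $\mathbf z,\mathbf z'$, with coefficients of size $pp'\times qq'$. Such an input can be viewed in two ways. First, as a $T_i$-input whose coefficients (indexed by $u\in\mathcal I_i$) are $p\times q$ block matrices, each block being a $T'_k$-type coefficient indexed by $u'\in\mathcal I'_k$. Second, conversely, with the roles of the two factors swapped. The combined left encoder is $E^L\otimes E'^L$, and the combined right encoder is $E^R\otimes E'^R$. Concretely:
\begin{enumerate}
\item Fix the $\mathbf z'$-monomial and block indices belonging to $P'$. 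Apply $E^L$ along the $\mathbf z$-coefficients and the $p,q$ block structure, separately for each $k$.
\item Then, for each source index $j$, apply $E'^L$ across $k$.
\end{enumerate}
The right encoder is handled the same way. This is well defined because the supports multiply: a monomial of $\mathcal I_i\mathcal I'_k$ factors uniquely as $uu'$, since the variable lists are disjoint.

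\emph{Correctness.} The key identity is that a selected product of tensor shape equals the composition of the two factor products. More precisely, $S_j\otimes S'_\ell$, applied to inputs that are elementary tensors $X\otimes X'$, $Y\otimes Y'$ (coefficientwise Kronecker products), gives $S_j(X,Y)\otimes S'_\ell(X',Y')$. This holds because $[ww'](XX'\cdot YY')=[w](XY)\otimes[w'](X'Y')$, again by uniqueness of factorization together with the mixed-product property of Kronecker products. Everything involved is bilinear, and elementary tensors span the input spaces. So the combined output $(D\otimes D')(\ldots)$ factors as $D\otimes D'$ applied to $S\otimes S'$ outputs, which equals $T\otimes T'$ by \eqref{eq:packing} for each packing.

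I expect the main obstacle to be making this bookkeeping rigorous without an explicit tensor-of-bilinear-maps formalism. The cleanest route is to write each packing as a bilinear identity: the target map equals $D\circ(\bigoplus_j S_j)\circ(E^L,E^R)$. Tensoring then reduces to the functoriality $(D\circ S\circ E)\otimes(D'\circ S'\circ E')=(D\otimes D')\circ(S\otimes S')\circ(E\otimes E')$ on elementary tensors, extended by bilinearity.

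\emph{Modulo $\varepsilon$.} Write the outputs as $T+\varepsilon R$ and $T'+\varepsilon R'$. Their tensor is
\[
 T\otimes T'+\varepsilon\bigl(R\otimes T'+T\otimes R'+\varepsilon R\otimes R'\bigr),
\]
which is again of the required form. The maps have coefficients in $\Q[\varepsilon,\varepsilon^{-1}]$, and these are closed under Kronecker products. Degrees in $\varepsilon$ add, since the outputs are multiplied, giving the bound $d_\varepsilon+d'_\varepsilon$. For SL-PPs, if all $A_{i,k}=A$, then the inputs to the second stage are copies of a common left input; since each encoder copies its common left input, the combined left encoder copies $A$. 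Here the identity is needed only for elementary tensors $A=X\otimes X'$ with equal left parts, and it is applied on the span of such tensors.

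\emph{Effort.} Equation \eqref{eq:effort-tensor} is direct. The exponent ranges of $\mathcal I\mathcal I'$ in a variable of $\mathbf z$ are those of $\mathcal I$, and likewise for $\mathbf z'$, so the per-variable quantities $e_k$ are just the union of the two lists. Matrix dimensions multiply as well. If any support is empty, the product support is empty, and both sides of \eqref{eq:effort-tensor} are zero. Summing then gives
\[
 \sum_{i,k}\operatorname{eff}(T_i)\operatorname{eff}(T'_k)=\Bigl(\sum_i\operatorname{eff}(T_i)\Bigr)\Bigl(\sum_k\operatorname{eff}(T'_k)\Bigr),
\]
and similarly for the sources. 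Hence the compression rates multiply.
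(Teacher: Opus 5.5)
Your proposal is correct and follows essentially the paper's argument. The combined encoders and decoder are the Kronecker products $E^L\otimes E'^L$, $E^R\otimes E'^R$, $D\otimes D'$, and exactness, the $\varepsilon$-degree bound, and the SL case all follow from the multiplicativity of the bilinear structure. The paper phrases this as $c_{(o,o'),(a,a'),(b,b')}=c^{(1)}_{o,a,b}\,c^{(2)}_{o',a',b'}$, while you phrase it as evaluation on elementary tensors extended by bilinearity; your effort and compression-rate argument is the same as the paper's.
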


\paragraph{Role exchanges.}
We recall the standard transposition principle for bilinear
algorithms~\cite[Chapter~13]{bcs97}.
\begin{lemma}[Role exchanges for packings]\label{lem:role-exchanges}
A PP of $(T_i)_{i=1}^m$ using $(S_j)_{j=1}^n$ gives, for every
permutation $\sigma\in S_3$ of the three roles, a PP of
\[
 (T_i^\sigma)_{i=1}^m
 \quad\text{using}\quad (S_j^\sigma)_{j=1}^n.
\]
The counts, the effort of each product, and exactness or validity
modulo $\varepsilon$ are preserved.
For $\sigma=(23)$, the same conclusion holds for SL-PPs.
\end{lemma}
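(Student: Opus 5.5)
The proof is the transposition principle made explicit. Because of the relation $(P^\sigma)^\tau=P^{\tau\sigma}$, it suffices to treat the two generating transpositions $(12)$ and $(23)$ and compose them. Each selected product $T_i$ is a bilinear map, so we encode it by its trilinear form
\[
 t_i(A,B,C)=\operatorname{tr}\Bigl(\sum_{u\in\mathcal I,\,v\in\mathcal J,\,w\in\mathcal K,\ uv=w}A_uB_vC_w\Bigr),
\]
where $C$ ranges over $r\times p$ matrix coefficients indexed by $\mathcal K$. The packing identity~\eqref{eq:packing} then says
\[
 \sum_i t_i(A_i,B_i,C_i)=\sum_j s_j\bigl((E^LA)_j,(E^RB)_j,(D^{\top}C)_j\bigr),
\]
where $D^\top$ is the transpose of the decoder, taken with respect to the trace pairings on output coefficient lists. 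Modulo $\varepsilon$, an extra term $\varepsilon\langle R_\varepsilon(A,B),C\rangle$ appears on the right. We first check that $t_i$ is invariant under relabelling: cyclically permuting $A,B,C$ and inverting monomial labels as in the definition of $P^\sigma$ carries $t_i$ to the trilinear form of $T_i^\sigma$, with matrix dimensions permuted as in the displayed exchange formulas. This is routine index bookkeeping. For matrix transposition one uses $\operatorname{tr}(XYZ)=\operatorname{tr}(Z^\top Y^\top X^\top)$, and the condition $uv=w$ becomes $uw^{-1}=v^{-1}$ and so on.

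For $(23)$ the new right encoder is $D^\top$, the new decoder is $(E^R)^\top$, and $E^L$ is unchanged. Reading the trilinear identity with $C$ as the new right input and pairing against $B$ gives the packing identity for $T^{23}$ using $S^{23}$. For $(12)$ the two encoders are swapped, with matrix transposition of coefficients to match the dimensions $[r,q,p]$, and the decoder is unchanged. All maps stay rational, or lie in $\Q[\varepsilon,\varepsilon^{-1}]$.

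The error term is the point needing care. The transposed error is $\varepsilon$ times the trilinear form of $R_\varepsilon$, which is polynomial in $\varepsilon$ and in the scalar inputs. Read as a bilinear function of the new inputs, it is still of the form $\varepsilon R'_\varepsilon$ with $R'_\varepsilon$ polynomial. This holds because the target and source trilinear forms both have coefficients in $\Q[\varepsilon,\varepsilon^{-1}]$ and their difference is divisible by $\varepsilon$ coefficientwise. So validity modulo $\varepsilon$ is preserved.

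Counts are obviously preserved. Effort is preserved because the range lengths of the three supports are permuted, and inversion preserves range length. Hence the triple $(a_k,b_k,h_k)$ is permuted, and the symmetric minimum $e_k$ together with the product $pqr$ is unchanged.

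For SL-PPs under $(23)$, the left encoder $E^L$ is untouched, so it still copies $A$. The identity is only required on the diagonal $A_i=A$, and the transposition argument applies verbatim with $A$ fixed throughout: it only transposes in the $B$ and $C$ variables, where everything is linear. The main obstacle is keeping the sign and shift conventions of monomial inversion consistent with the output pairing. Concretely, one must verify that pairing the output coefficient at $w$ with the new input coefficient at $w^{-1}$ makes the trilinear form of $P^{23}$ agree with that of $P$.
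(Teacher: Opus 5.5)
Your proposal is correct and matches the paper's argument: reduce to the generators $(12)$ and $(23)$, handle $(12)$ via $(AB)^{\mathsf T}=B^{\mathsf T}A^{\mathsf T}$ with the encoders swapped, and handle $(23)$ by transposing in the right-input and output variables with the left input fixed. The new maps are then $E^L$, $D^{\mathsf T}$, $(E^R)^{\mathsf T}$, which also keeps the copying left encoder for SL-PPs. Your trilinear-form bookkeeping is a symmetric repackaging of the paper's identity $T_A^{\mathsf T}=(E^R)^{\mathsf T}S_{E^LA}^{\mathsf T}D^{\mathsf T}$, and your coefficientwise check that the error stays in $\varepsilon\Q[\varepsilon]$ spells out what the paper states in one line.
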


\paragraph{Composition.}

We can replace each source computation by another packing.\footnote{We use
$(f\circ g)(x)=f(g(x))$: apply $g$ first, then $f$.}

\begin{lemma}[Composition {\cite[Chapters~14--15]{bcs97}}]\label{lem:composition}
PPs of $(T_i)_{i=1}^m$ using $(S_j)_{j=1}^n$ and of $(S_j)$ using
$(R_k)_{k=1}^{\ell}$ give a PP of $(T_i)$ using $(R_k)$,
with $m$ targets and $\ell$ sources.
Exactness or validity modulo $\varepsilon$ is preserved.
If all three lists have positive total effort, the compression rate
of the composed packing is the product of the two compression rates.
\end{lemma}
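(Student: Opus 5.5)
The plan is to substitute the second packing's linear maps into the first packing at the level of its source computations, and then check the three claimed properties in turn. Write $(E^L,E^R,D)$ for the packing of $(T_i)$ using $(S_j)$, and $(F^L,F^R,D')$ for the packing of $(S_j)$ using $(R_k)$. The composed encoders are $F^L\circ E^L$ and $F^R\circ E^R$; the composed decoder is $D\circ D'$. All of these are rational linear maps, or have coefficients in $\Q[\varepsilon,\varepsilon^{-1}]$ in the modulo-$\varepsilon$ case, because composites of such maps stay in the same class.

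For exactness, fix target inputs $A,B$ and set $A'=E^L(A)$ and $B'=E^R(B)$. These are allowed source inputs, since the first packing's encoders map into the source input lists.
\begin{itemize}
\item Applying \eqref{eq:packing} for the second packing to $(A',B')$ gives $D'\bigl((R_k((F^LA')_k,(F^RB')_k))_k\bigr)=(S_j(A'_j,B'_j))_j$.
\item Applying $D$ to both sides and then using \eqref{eq:packing} for the first packing yields $(T_i(A_i,B_i))_i$, which is the required identity.
\end{itemize}

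Modulo $\varepsilon$, the same chain holds with error terms. The second packing contributes an error $\varepsilon R'_\varepsilon(A',B')$, and $D$ maps it to $D(\varepsilon R'_\varepsilon(E^LA,E^RB))$. This is the step that needs care, because $D$ may contain negative powers of $\varepsilon$, so this term need not be divisible by $\varepsilon$. The fix I would try first is the reparametrization trick the paper already uses for removing the same-left restriction: replace the inner packing's parameter $\varepsilon$ by $\varepsilon^{N}$. Here $N-1$ exceeds the largest negative power of $\varepsilon$ that appears in $D$, and also the negative powers appearing in $E^L$ and $E^R$, weighted by the degree of $R'_\varepsilon$ in the input entries. $R'$ is polynomial in both $\varepsilon$ and the input entries, and its degree in the input entries is bounded; it is bilinear in the entries of $A',B'$, so this degree is at most $2$. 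With that choice, the whole propagated error is $\varepsilon$ times a polynomial in $\varepsilon$ and the target input entries. Adding the first packing's own error $\varepsilon R_\varepsilon(A,B)$ then gives validity modulo $\varepsilon$.

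The count statement is immediate: the composed packing has the $m$ targets of the first packing and the $\ell$ sources of the second. For the compression rate, the rate of the composite is $\sum_i\operatorname{eff}(T_i)/\sum_k\operatorname{eff}(R_k)$. Writing this as
\[
 \frac{\sum_i\operatorname{eff}(T_i)}{\sum_j\operatorname{eff}(S_j)}\cdot\frac{\sum_j\operatorname{eff}(S_j)}{\sum_k\operatorname{eff}(R_k)}
\]
is legitimate because all three totals are positive, and this product is exactly the product of the two compression rates. The main obstacle is the $\varepsilon$-bookkeeping in the third step; everything else is routine substitution.
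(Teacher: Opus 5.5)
Your proposal matches the paper's proof step for step: the same composed maps $(F^L\circ E^L,\,F^R\circ E^R,\,D\circ D')$, the same substitution argument for exactness, the same fix of replacing the inner packing's parameter by $\varepsilon^{N}$, and the same cancellation of $\sum_j\operatorname{eff}(S_j)$ for the rate. One point to state more crisply: the inner error is linear in each input list, so the condition you need is $N\ge p_L+p_R+p_D+1$, where $\varepsilon^{p_L}E^L$, $\varepsilon^{p_R}E^R$, $\varepsilon^{p_D}D$ have polynomial entries; this is the sum of the three bounds, not each bound separately, and it is exactly the paper's choice.
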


\paragraph{Isomorphism.}
Two shapes are \emph{isomorphic}, written $P\cong Q$, if they
differ only by renaming polynomial variables. Formally,
$(\mathcal I,\mathcal J,\mathcal K)\cong
(\mathcal I',\mathcal J',\mathcal K')$ if a permutation $\pi$
of the variables satisfies
\[
 (\mathcal I',\mathcal J',\mathcal K')=
 (\pi(\mathcal I),\pi(\mathcal J),\pi(\mathcal K)).
\]
We use the same notation for selected products with the same matrix
dimensions and isomorphic shapes.

\begin{lemma}[Isomorphism]\label{lem:isomorphism}
Isomorphism is an equivalence relation. For all shapes $P,Q,R,P',Q'$
and $\sigma\in S_3$,
\[
 \begin{aligned}
 (P\otimes Q)\otimes R&\cong P\otimes(Q\otimes R),&
 P\otimes Q&\cong Q\otimes P,\\
 (P\otimes Q)^\sigma&\cong P^\sigma\otimes Q^\sigma.
 \end{aligned}
\]
If $P\cong Q$, then $P^\sigma\cong Q^\sigma$; if also $P'\cong Q'$, then
$P\otimes P'\cong Q\otimes Q'$.
Replacing any target or source of a PP by an isomorphic selected
product preserves the counts and exactness or validity modulo
$\varepsilon$. Applying one common variable renaming to all tasks
also preserves SL-PPs.
\end{lemma}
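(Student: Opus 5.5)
The statement has two parts. The first is purely about monomial sets: $\cong$ is an equivalence relation, and it is compatible with associativity, commutativity, role exchanges and tensoring. The second part transfers isomorphisms into packings. I will prove everything by writing down explicit variable permutations and using that a permutation $\pi$ acts on monomials as a group automorphism. Concretely, $\pi(uv)=\pi(u)\pi(v)$, $\pi(u^{-1})=\pi(u)^{-1}$, and $\pi$ is a bijection on monomials. Hence $\pi(\mathcal I\mathcal J)=\pi(\mathcal I)\pi(\mathcal J)$ and $\pi(\mathcal I^{-1})=\pi(\mathcal I)^{-1}$.

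\textbf{Equivalence relation.} Reflexivity uses the identity permutation. Symmetry uses $\pi^{-1}$. Transitivity uses the composite permutation, noting that the action is functorial: $(\pi'\pi)(u)=\pi'(\pi(u))$.

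\textbf{Tensor identities.} Tensoring first renames the variables into disjoint lists; I take the concatenated list $(\mathbf z,\mathbf z',\mathbf z'')$.
\begin{itemize}
\item Associativity: both bracketings produce the same sets $\mathcal I\mathcal I'\mathcal I''$ and so on, by associativity of the set product. The only difference is the renaming convention, which is absorbed by a permutation.
\item Commutativity: the block permutation swapping the lists $\mathbf z$ and $\mathbf z'$ maps $\mathcal I\mathcal I'$ to $\mathcal I'\mathcal I$, using commutativity of monomials, and likewise for $\mathcal J$ and $\mathcal K$.
\item Role exchanges: each exchange permutes $(\mathcal I,\mathcal J,\mathcal K^{-1})$. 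Since $(\mathcal K\mathcal K')^{-1}=\mathcal K^{-1}\mathcal K'^{-1}$ and the set product is taken factorwise, $(P\otimes Q)^\sigma$ and $P^\sigma\otimes Q^\sigma$ have literally equal supports. Matrix dimensions are permuted identically on both sides.
\item Compatibility: if $\pi$ realizes $P\cong Q$, the compatibility of $\pi$ with inverses gives $P^\sigma\cong Q^\sigma$. If $\pi,\pi'$ realize $P\cong Q$ and $P'\cong Q'$ on disjoint lists, then $\pi\sqcup\pi'$ realizes $P\otimes P'\cong Q\otimes Q'$.
\end{itemize}

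\textbf{Packings.} Suppose a target or source $P$ is replaced by $\pi(P)$. The coefficient lists of $P$ and $\pi(P)$ are indexed by supports in bijection via $u\mapsto\pi(u)$. I compose the encoders and decoder with the induced coordinate permutations, which are rational linear maps with no $\varepsilon$. The key point is that, since $\pi$ is a multiplicative bijection,
\[
[\pi(w)](\pi(A)\pi(B))=[w](AB).
\]
Therefore $\operatorname{Prod}_{\pi(P)}$ evaluated on relabeled inputs equals the relabeled $\operatorname{Prod}_P$. Identity~\eqref{eq:packing}, including any error term $\varepsilon R_\varepsilon$, is then preserved verbatim. The counts are unchanged, and exactness or validity modulo $\varepsilon$ is preserved.

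For SL-PPs, the same-left condition and the copying requirement on $E^L$ must survive. This holds when one common $\pi$ is applied to all tasks: all left spaces become $\pi(\mathcal I)$, and the relabeled left encoder still copies $A$. A per-task renaming could break this, which is why the statement restricts to a common renaming. I expect no step to be a genuine obstacle; the main care is bookkeeping of the renaming conventions in the tensor definition.
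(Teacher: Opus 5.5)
Your proof is correct and takes essentially the same approach as the paper. You use identity, inverse, and composite permutations for the equivalence relation, and regroup or swap the disjoint variable lists for the tensor identities. For role exchanges you use that permuting supports and inverting monomials commutes with renaming and distributes over products in disjoint variables. For packings you compose the encoders and decoder with the induced coefficient permutations, with one common renaming keeping the left encoder a copying map for SL-PPs. Your multiplicativity identity $[\pi(w)](\pi(A)\pi(B))=[w](AB)$ simply spells out the correctness step that the paper states in one line.
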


\section{Constructing polynomial packings}\label{sec:constructions}

We construct the primal and dual packings, combine them on separate
variables, and then remove the same-left requirement.

\subsection{The primal and dual gadgets}
Our first packing extends the three-for-one gadget
of~\cite[Lemma 4.2]{original}. For $ct\ge a+2b-1$, it computes
$c+1$ full products and one middle product from $c$ full products
with $t$ additional coefficients in each source right input.

\begin{theorem}[Primal packing]\label{lem:primal}
For all integers $a,b,c,t,p,q,r\ge1$ with $ct\ge a+2b-1$, there exists
an SL-PP modulo $\varepsilon$ of $c+1$ copies of
$\operatorname{Full}_{a,b}[p,q,r]$ and one middle product
$\operatorname{Mid}_{a,b}[p,q,r]$ using $c$ copies of
$\operatorname{Full}_{a,b+t}[p,q,r]$.
\end{theorem}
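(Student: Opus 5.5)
We adapt the telescoping construction from the introduction, with the single shift $z$ replaced by $z^t$ and with $t$ coefficients of $\varepsilon Q$ packed into each encoded polynomial instead of one. Put $s=a+b-1$.

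\textbf{Targets in unshifted form.} The targets are:
\begin{itemize}
\item a common left input $A$, a polynomial of length $a$ with $p\times q$ coefficients;
\item right inputs $P_1,\ldots,P_{c+1}$ of length $b$ with $q\times r$ coefficients;
\item one further right input $Q$ for the middle product.
\end{itemize}
By the coefficient shift in the Mid convention (footnote~\ref{footnote}), we may regard $Q=\sum_{k=0}^{s-1}q_kz^k$ as an ordinary polynomial of length $s$. Under this relabeling, $\operatorname{Mid}_{a,b}$ asks for $[z^k](AQ)$ for $a-1\le k\le a+b-2$. This relabeling is a linear permutation of coordinates, so it can be absorbed into $E^R$ and $D$. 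Set $q_k=0$ for $s\le k<ct$; this is allowed because $ct\ge a+2b-1\ge s$.

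\textbf{Encoding.} For $1\le j\le c$, define
\[
 E_j \;=\; P_j - z^tP_{j+1} + \varepsilon z^b\sum_{u=0}^{t-1} q_{(j-1)t+u}\,z^u .
\]
Each $E_j$ has length $b+t$. The left encoder copies $A$ to every source, so this is an SL-PP using $c$ copies of $\operatorname{Full}_{a,b+t}[p,q,r]$. The sources return $W_j=AE_j$, all $s+t$ coefficients each.

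\textbf{Telescoping.} The decoder first forms $W=\sum_{j=1}^c z^{(j-1)t}W_j$, which is linear in the coefficients of the $W_j$. The $P$-terms telescope and the $Q$-blocks reassemble into $Q$:
\[
 W \;=\; AP_1 - z^{ct}AP_{c+1} + \varepsilon z^b AQ .
\]
The three summands occupy the following coefficient positions:
\begin{itemize}
\item $AP_1$ occupies $[0,s-1]$;
\item $z^{ct}AP_{c+1}$ starts at position $ct\ge a+2b-1$;
\item the wanted coefficients $[z^k](AQ)$, $a-1\le k\le a+b-2$, appear in $W$ at positions $k+b\in[s,\,a+2b-2]$.
\end{itemize}
These positions lie strictly between the end of $AP_1$ and the start of $z^{ct}AP_{c+1}$, so there they equal $\varepsilon[z^k](AQ)$ exactly.

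\textbf{Decoding.}
\begin{itemize}
\item \emph{Middle product.} Dividing those gap coefficients by $\varepsilon$ yields the middle product exactly. This is the only use of $\varepsilon^{-1}$, and it is applied to quantities divisible by $\varepsilon$.
\item \emph{First full product.} The first $s$ coefficients of $W$ equal $AP_1$ plus $\varepsilon$ times a polynomial in $\varepsilon$ and the inputs.
\item \emph{Remaining full products.} For $j=1,\ldots,c$, we have $AP_j-W_j=z^tAP_{j+1}-\varepsilon(\cdots)$. Reading coefficients $t,\ldots,t+s-1$ of $\widetilde{AP_j}-W_j$, where $\widetilde{AP_j}$ is the previously decoded value, gives $AP_{j+1}$ modulo $\varepsilon$.
\end{itemize}
Errors accumulate only through additions and coefficient shifts with coefficients in $\Q[\varepsilon]$, so they remain of the form $\varepsilon R_\varepsilon$. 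Everything is linear in the matrix coefficients and preserves the order of the factors $A\cdot(\cdot)$, so the construction works for arbitrary $p,q,r$. Hence it is an SL-PP modulo $\varepsilon$ in the sense of Definition~\ref{def:packing}.

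\textbf{Main obstacle.} The delicate step is the index bookkeeping. One must check that the $Q$-blocks, split into chunks of length $t$ and placed at offset $b$, reassemble into $z^bQ$ under the weights $z^{(j-1)t}$. One must also check that the middle window, after translation between the paper's shifted $\operatorname{Mid}$ convention and ordinary coefficients, lands entirely in the gap $[s,ct-1]$. This is exactly where the hypothesis $ct\ge a+2b-1$ is used, and I would verify it carefully against the support table for $\operatorname{Mid}_{a,b}$.
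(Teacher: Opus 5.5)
Your proposal is correct and follows the paper's proof essentially step by step. It uses the same relabeling of $Q$ as the length-$s$ polynomial $z^{s-1}Q$, the same encoding $E_j=P_j-z^tP_{j+1}+\varepsilon z^b Q^{(j)}$, and the same telescoping sum. It then recovers the middle product exactly from the gap positions $s,\ldots,a+2b-2$ (guaranteed by $ct\ge a+2b-1$) and the full products successively modulo $\varepsilon$, as the paper does.
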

\begin{proof}
We are given $A$ of length $a$, polynomials $P_1,\ldots,P_{c+1}$
of length $b$, and $Q$ supported on $\{z^j:1-s\le j\le0\}$.
Set $\widetilde Q=z^{s-1}Q$, which has length $s$.
We will encode only the right
inputs, multiply the encoded polynomials by the unchanged $A$, and
recover the requested coefficients from these products.

\paragraph{Encoding procedure.}
Since $ct\ge s+b>s$, we can split the coefficient list of $\widetilde Q$
into $c$ consecutive blocks of length $t$, padding with zeros:
\[
 Q^{(j)}(z)=\sum_{v=0}^{t-1}[z^{t(j-1)+v}]\widetilde Q\,z^v,
 \qquad
 \widetilde Q(z)=\sum_{j=1}^c z^{t(j-1)}Q^{(j)}(z).
\]
For $1\le j\le c$, set
\[
 E_j=P_j-z^tP_{j+1}+\varepsilon z^bQ^{(j)}.
\]
The first two terms combine adjacent right inputs using a shift
of $t$. The last term places one block of $\varepsilon\widetilde Q$ in
positions $b,\ldots,b+t-1$. All three terms have degree less than
$b+t$, so each $E_j$ is a valid right input to
$\operatorname{Full}_{a,b+t}$.

Compute the $c$ full source products $W_j=AE_j$, each of length
$s+t$.

\paragraph{Decoding procedure.}
From these products, form the following telescoping sum:
\begin{equation}\label{eq:telescope}
 \begin{aligned}
 W&=\sum_{j=1}^c z^{t(j-1)}W_j\\
  &=\sum_{j=1}^c z^{t(j-1)}AP_j
    -\sum_{j=1}^c z^{tj}AP_{j+1}
    +\varepsilon z^b A\sum_{j=1}^c z^{t(j-1)}Q^{(j)}\\
  &=AP_1+\varepsilon z^bA\widetilde Q-z^{ct}AP_{c+1}.
 \end{aligned}
\end{equation}
\paragraph{Recover the middle product.}
The first endpoint product $AP_1$ has no coefficients at exponents
$s$ or above. The last endpoint product $z^{ct}AP_{c+1}$ has no
coefficients below $ct\ge s+b$. Thus neither contributes at
exponents $s,\ldots,s+b-1$. For $0\le k<b$, we obtain
\[
 \varepsilon^{-1}[z^{s+k}]W
   =[z^{a-1+k}](A\widetilde Q)
   =[z^{1-b+k}](AQ).
\]
These are exactly the $b$ coefficients of
$\operatorname{Mid}_{a,b}(A,Q)$.

\paragraph{Recover the full products.}
First take the first $s$ coefficients of $W$:
\[
 Z_1=\sum_{k=0}^{s-1}[z^k]W\,z^k.
\]
Equation~\eqref{eq:telescope} gives
$Z_1\equiv AP_1\pmod{\varepsilon}$: the final endpoint product
does not reach this range, and the contribution from $\widetilde Q$ is
divisible by $\varepsilon$.

Once $Z_j$ is known, subtract $W_j$, discard the first $t$
coefficients, and take the next $s$ coefficients. In formulas,
define the length-$s$ polynomial $Z_{j+1}$ by
\[
 [z^k]Z_{j+1}=[z^{k+t}](Z_j-W_j),
 \qquad 0\le k<s,\quad 1\le j\le c.
\]
To verify this step, suppose $Z_j\equiv AP_j\pmod{\varepsilon}$.
Then
\[
 \begin{aligned}
 Z_j-W_j
   &=Z_j-AP_j+z^tAP_{j+1}-\varepsilon z^bAQ^{(j)}\\
   &\equiv z^tAP_{j+1}\pmod{\varepsilon}.
 \end{aligned}
\]
Thus $Z_{j+1}\equiv AP_{j+1}\pmod{\varepsilon}$, and induction gives
$Z_j\equiv AP_j\pmod{\varepsilon}$ for all $1\le j\le c+1$.

The encoder and decoder are linear with coefficients in
$\Q[\varepsilon,\varepsilon^{-1}]$.
The $Z_j$ are polynomial in $\varepsilon$, and the middle output
is exact, so these formulas give the required packing modulo $\varepsilon$.
\end{proof}

For $c=1$ and $t=a+2b-1$, this specializes to the three-for-one
degeneration of~\cite[Lemma~4.2]{original}.

\subsection{Amplifying compression through multiple variables}

We next pair a packing with its dual using Lemma~\ref{lem:role-exchanges}
and tensor the result to amplify the compression rate. For $d\ge1$, let
\begin{equation}\label{eq:paired-shape}
 T^d_{a,b}:=(\operatorname{Full}_{a,b}\otimes
                 \operatorname{Mid}_{a,b})^{\otimes d}.
\end{equation}
When $d$ is clear from context, we write $T_{a,b}=T^d_{a,b}$.

Tensoring a packing with its dual squares its compression rate.
Taking $d$ tensor copies of the result raises the original rate to
the power $2d$. Retaining only targets with equally many Full and Mid
factors loses at most a factor $d+1$, so an initial compression rate
greater than one can be amplified exponentially.

\begin{lemma}[Same-left amplification]\label{lem:sl}
Let $d,m,n,a,b,b',p,q,r\ge1$ be integers. Suppose an SL-PP packs
$m$ targets, each $\operatorname{Full}_{a,b}[p,q,r]$ or
$\operatorname{Mid}_{a,b}[p,q,r]$, using $n$ copies of
$\operatorname{Full}_{a,b'}[p,q,r]$. Let
\begin{equation}\label{eq:counts}
 u=n^{2d},\qquad
 w=\left\lceil\frac{m^{2d}}{d+1}\right\rceil.
\end{equation}
Then there exist shapes $U_i\cong T_{a,b}$ and an SL-PP of
\[
 \bigl(U_i[(pq)^d,(pq)^d,r^{2d}]\bigr)_{i=1}^w
\]
using $u$ copies of
\[
 T_{a,b'}[(pq)^d,(pq)^d,r^{2d}].
\]
Exactness or validity modulo $\varepsilon$ is preserved.
\end{lemma}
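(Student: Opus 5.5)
Apply Lemma~\ref{lem:role-exchanges} with $\sigma=(23)$ to the given SL-PP. This gives the dual SL-PP: $m$ targets, each $\widehat{\operatorname{Full}}_{a,b}=\operatorname{Mid}_{a,b}$ or $\widehat{\operatorname{Mid}}_{a,b}=\operatorname{Full}_{a,b}$, packed using $n$ copies of $\operatorname{Mid}_{a,b'}$. Here $\widehat{\widehat P}=P$. The matrix dimensions become $[q,p,r]$.

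Next tensor the primal and dual SL-PPs with Lemma~\ref{lem:combine}, keeping the same-left version. This gives an SL-PP of $m^2$ targets of the form $X\otimes Y$, where $X,Y\in\{\operatorname{Full}_{a,b},\operatorname{Mid}_{a,b}\}$. It uses $n^2$ copies of $\operatorname{Full}_{a,b'}\otimes\operatorname{Mid}_{a,b'}$ with dimensions $[pq,pq,r^2]$. Tensoring $d$ further copies gives an SL-PP of $m^{2d}$ targets using $n^{2d}=u$ copies of $T_{a,b'}[(pq)^d,(pq)^d,r^{2d}]$. Each target is a tensor product of $2d$ univariate factors, each Full or Mid at lengths $a,b$, on $2d$ disjoint variables. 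Any mismatch between the first $d$ and last $d$ variable positions is fixed later by renaming.

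Then select the balanced targets. A target whose $2d$ factors contain exactly $d$ Full and $d$ Mid factors is isomorphic to $T_{a,b}$. The reason is that a permutation of the variables moves the Full factors to the odd positions and the Mid factors to the even positions, using the commutativity and associativity parts of Lemma~\ref{lem:isomorphism}. Delete all other targets via Observation~\ref{obs:elementary}, which preserves SL-PPs because the source list is unchanged.

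The one subtle point is the SL requirement: Lemma~\ref{lem:isomorphism} only guarantees SL preservation under a common renaming. I will not rename the targets. Instead I set $U_i$ to be the balanced target shape itself. It is isomorphic to $T_{a,b}$ and has the same left support $\{z^{\mathbf i}:0\le i_k<a\}$ as every source, because Full and Mid share the left support in every variable. So $A$ is copied consistently and nothing needs renaming.

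The main obstacle is the counting: showing that at least $w=\lceil m^{2d}/(d+1)\rceil$ targets are balanced. Let $f$ and $g=m-f$ be the numbers of Full and Mid targets in the original packing. In the dual packing these counts swap: there are $g$ Full and $f$ Mid targets. A target is a word whose first $d$ letters come from the primal list and last $d$ letters from the dual list. Let $N_k$ be the number of choices of the first $d$ letters with exactly $k$ Full letters, so $N_k=\binom dk f^kg^{d-k}$. The number of choices of the last $d$ letters with exactly $d-k$ Full letters is $\binom d{d-k}g^{d-k}f^{k}=N_k$. Hence the number of balanced words is $\sum_{k=0}^d N_k^2$.

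By Cauchy--Schwarz over the $d+1$ values of $k$,
\[
 \sum_{k=0}^d N_k^2\ \ge\ \frac{1}{d+1}\Bigl(\sum_{k=0}^d N_k\Bigr)^2=\frac{m^{2d}}{d+1}.
\]
Since the left side is an integer, it is at least $w$. Keep exactly $w$ of these targets and delete the rest.

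Finally, exactness or validity modulo $\varepsilon$ is preserved at every step by Lemmas~\ref{lem:combine} and~\ref{lem:role-exchanges} and Observation~\ref{obs:elementary}.
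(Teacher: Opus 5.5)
Your proposal is correct and follows the paper's proof essentially step for step. You dualize via the $(23)$ exchange, tensor the packing with its dual, take the $d$-fold power, count balanced words by $\sum_k N_k^2$ with Cauchy--Schwarz (the paper's Claim~\ref{lem:count-words}, which indexes by Mid letters rather than Full), and delete the unbalanced targets; your explicit remark that the retained targets are left unrenamed, sharing the sources' left support so the copying encoder is unaffected, matches the paper's choice of $U_i$ as the retained shapes themselves.
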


We first show a simple counting fact providing a lower bound on how many products we can retain.

\begin{claim}[Counting words by type]\label{lem:count-words}
Let $f,g\ge0$ and $d\ge1$ be integers, with $m=f+g\ge1$, and put
\[
 \begin{aligned}
 \mathcal P&=\{\mathsf F_1,\ldots,\mathsf F_f,
               \mathsf M_1,\ldots,\mathsf M_g\},\\
 \mathcal D&=\{\mathsf F_1,\ldots,\mathsf F_g,
               \mathsf M_1,\ldots,\mathsf M_f\}.
 \end{aligned}
\]
Every $\mathsf F$-letter has type Full and every $\mathsf M$-letter
has type Mid. Let $M_d$ be the number of words
whose first $d$ letters lie in $\mathcal P$, whose last $d$ letters
lie in $\mathcal D$, and which contain exactly $d$ letters of each
type. Then
\[
 M_d\ge\frac{m^{2d}}{d+1}.
\]
\end{claim}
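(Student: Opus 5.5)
Count words by the number $k$ of Full letters among the first $d$ positions. The first $d$ letters come from $\mathcal P$, which has $f$ Full-letters and $g$ Mid-letters. The number of ways to fill them with exactly $k$ Full letters is
\[
\alpha_k=\binom dk f^k g^{d-k}.
\]
The word then needs exactly $d-k$ Full letters among the last $d$ positions. These come from $\mathcal D$, which has $g$ Full-letters and $f$ Mid-letters, so the count is
\[
\binom{d}{d-k}g^{d-k}f^{k}=\binom dk f^kg^{d-k}=\alpha_k.
\]
Hence the key identity is
\[
M_d=\sum_{k=0}^d \alpha_k^2 .
\]
It holds because the roles of $f$ and $g$ swap between $\mathcal P$ and $\mathcal D$, and the binomial coefficient is symmetric.

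Next, by the binomial theorem,
\[
\sum_{k=0}^d\alpha_k=(f+g)^d=m^d.
\]
Cauchy--Schwarz over the $d+1$ indices $k=0,\ldots,d$ gives
\[
\Bigl(\sum_k\alpha_k\Bigr)^2\le (d+1)\sum_k\alpha_k^2,
\]
that is, $m^{2d}\le(d+1)M_d$, which is the claim.

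The degenerate cases need no separate treatment. If $f=0$ or $g=0$, the convention $0^0=1$ keeps the binomial expansion valid. For instance, with $g=0$ only $k=d$ contributes, and $M_d=f^{2d}=m^{2d}\ge m^{2d}/(d+1)$.

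The only step needing care is the bijective count establishing $M_d=\sum_k\alpha_k^2$. In particular, one must check that the type condition ``exactly $d$ letters of each type'' is equivalent to the second half containing exactly $d-k$ Full letters. This holds because the total length is $2d$, so $k+(d-k)=d$ Full letters force exactly $d$ Mid letters. Everything else is routine.
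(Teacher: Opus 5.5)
Your proposal is correct and follows essentially the same argument as the paper. The paper indexes the first half by its number $h$ of Mid letters, obtaining $N_h=\binom dh g^hf^{d-h}$ (your $\alpha_{d-h}$), and then uses the same squared-count identity $M_d=\sum_h N_h^2$, Cauchy--Schwarz over $d+1$ terms, and the binomial theorem.
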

\begin{proof}
For $0\le h\le d$, a first half with exactly $h$ Mid letters
can be chosen in
\[
 N_h=\binom dh g^h f^{d-h}
\]
ways: choose the $h$ Mid positions, fill each with one of the $g$
Mid letters in $\mathcal P$, and fill each remaining position with
one of its $f$ Full letters.

To obtain $d$ letters of each type overall, the second half must
have exactly $h$ Full letters. Since $\mathcal D$ has $g$ Full
letters and $f$ Mid letters, it has the same $N_h$ choices.
Each of these can follow any of the $N_h$ first halves. Thus
\[
 M_d=\sum_{h=0}^d N_h^2
 \ge\frac{\left(\sum_{h=0}^d N_h\right)^2}{d+1}
 =\frac{(f+g)^{2d}}{d+1}
 =\frac{m^{2d}}{d+1}
\]
where the inequality follows from Cauchy--Schwarz, the next equality
from the binomial theorem, and the final equality from $m=f+g$.
\end{proof}

\begin{proof}[Proof of Lemma~\ref{lem:sl}]
Apply Lemma~\ref{lem:role-exchanges} with $\sigma=(23)$ to obtain the dual
packing. Tensor the original packing with its dual, then tensor $d$ copies
of the result.
The dual has matrix dimensions $[q,p,r]$, so pairing gives
$[pq,pq,r^2]$ and the $d$ tensor copies give
$[(pq)^d,(pq)^d,r^{2d}]$.
By Lemma~\ref{lem:combine}, the result is an SL-PP with $m^{2d}$
targets and $n^{2d}$ sources of shape $T_{a,b'}$ at these dimensions,
preserving exactness or validity modulo $\varepsilon$.
Its targets are indexed by pairs of length-$d$ words from the
original target list and its dual. Since Full and Mid are each other's
duals, $f$ Full and $g$ Mid targets in the original
list give $g$ Full and $f$ Mid targets in its dual, with $f+g=m$.
After labeling these targets as in Claim~\ref{lem:count-words}, the claim
supplies at least $w$ targets with exactly $d$ factors of each type, hence
shapes $U_i\cong T_{a,b}$ by Lemma~\ref{lem:isomorphism}.
Retain $w$ of them by Observation~\ref{obs:elementary}.
\end{proof}

The target shapes $U_i$ may differ. Putting each target in
canonical form may therefore require different variable renamings,
which also change its left input. We first make the left inputs
independent so that these changes can be chosen separately.

\subsection{Removing the same-left requirement}

We now show how to turn a same-left polynomial packing (SL-PP) into
a general polynomial packing (PP), allowing independent left inputs.
OpenAI~\cite[Section~3]{original} proved an analogous result using
finite Fourier projection followed by scaling variables
by powers of $\varepsilon$ to eliminate mismatched labels.
We give a direct and elementary proof using polynomial interpolation.

\begin{lemma}[From SL-PP to PP]\label{lem:sl-to-pp}
Let $u,w,p,q,r\ge1$ be integers and $F_1,\ldots,F_w,S$ be shapes.
An SL-PP modulo $\varepsilon$ of
\[
 \bigl(F_i[p,q,r]\bigr)_{i=1}^w
 \quad\text{using }u\text{ copies of }S[p,q,r]
\]
gives a PP modulo $\varepsilon$ of the
targets $\bigl(F_i[p,q,wr]\bigr)_{i=1}^w$ using $2w-1$ copies of
$S[p,q,ur]$.
\end{lemma}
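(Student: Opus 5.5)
The plan is to batch the independent left inputs by polynomial interpolation in an auxiliary scalar variable $\xi$, as sketched in the introduction. Since the SL-PP has a common left polynomial space, all $F_i$ and $S$ share one left support $\mathcal I$. For each target $i\in\{0,\ldots,w-1\}$ we are given $A_i$ with $p\times q$ coefficients, and $B_i$ with $q\times wr$ coefficients. We split $B_i$ columnwise into blocks $B_{i,0},\ldots,B_{i,w-1}$, each with $q\times r$ coefficients.

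\textbf{Encoding.} Define, coefficientwise, the Lagrange interpolant $A(\xi)$ of degree at most $w-1$ with $A(i)=A_i$. For each $i$, define $B_i'(\xi)$ of degree at most $w-1$ with $B_i'(i+\varepsilon j)=B_{i,j}$ for $0\le j<w$. The nodes $i+\varepsilon j$ are distinct, and the Lagrange coefficients lie in $\Q[\varepsilon,\varepsilon^{-1}]$. Their denominators are products of the differences $\varepsilon(j-j')$, so these coefficients involve powers down to $\varepsilon^{-(w-1)}$.

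\textbf{Sources.} Fix $2w-1$ distinct rational points $k$. At each point $k$, run the given SL-PP with common left input $A(k)$ and right inputs $B_0'(k),\ldots,B_{w-1}'(k)$. Its $u$ sources all have the form $S[p,q,r]$ with the same left input $A(k)$, by the copy requirement. Since $S(A,[X_1\cdots X_u])=[S(A,X_1)\cdots S(A,X_u)]$, horizontally concatenating their right inputs gives a single source $S[p,q,ur]$. This yields $2w-1$ sources in total, and all encoders are linear.

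\textbf{Decoding.} For each $i$, the exact value $F_i(A(\xi),B_i'(\xi))$ is bilinear in the two inputs. It is therefore a polynomial in $\xi$ of degree at most $2w-2$. We recover it by Lagrange interpolation from its values at the $2w-1$ points $k$, and then evaluate at $\xi=i+\varepsilon j$. By bilinearity,
\[
F_i(A(i+\varepsilon j),B_{i,j})=F_i(A_i,B_{i,j})+F_i\bigl(A(i+\varepsilon j)-A_i,\ B_{i,j}\bigr).
\]
The second term is divisible by $\varepsilon$, because $A(i+\varepsilon j)-A_i$ is $\varepsilon$ times a polynomial in $\varepsilon$. Concatenating over $j$ gives $F_i(A_i,B_i)$ modulo $\varepsilon$, which is the target $F_i[p,q,wr]$.

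\textbf{Main obstacle: the error term.} The SL-PP returns the exact value only up to an error $\varepsilon R_\varepsilon$, and the right inputs $B_i'(k)$ carry factors as small as $\varepsilon^{-(w-1)}$. To absorb these, first substitute $\varepsilon\mapsto\varepsilon^w$ in all three maps of the given SL-PP. This is still a valid SL-PP, now with error $\varepsilon^w R_{\varepsilon^w}$.

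Next I would argue that the error is bilinear in $(A,B)$. The packing output is linear in $A$ for fixed $B$ and linear in $B$ for fixed $A$, with coefficients in $\Q[\varepsilon,\varepsilon^{-1}]$, and so is the exact value. Hence $R$ is bilinear with coefficients in $\Q[\varepsilon]$. Substituting $B_i'(k)$ therefore multiplies the error by at most $\varepsilon^{-(w-1)}$, so it remains divisible by $\varepsilon^{w}\cdot\varepsilon^{-(w-1)}=\varepsilon$. The left input $A(k)$ has rational coefficients and causes no further loss.

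The interpolation in $\xi$ uses only rational weights, and evaluation at $i+\varepsilon j$ introduces only nonnegative powers of $\varepsilon$. Thus the final error stays $\varepsilon$ times a polynomial, and all decoder maps have coefficients in $\Q[\varepsilon,\varepsilon^{-1}]$, as Definition~\ref{def:packing} requires. This error accounting is the step I expect to need the most care.
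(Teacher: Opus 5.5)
Your proposal is correct and follows essentially the same route as the paper's proof. Both arguments concatenate the $u$ same-left sources into one $S[p,q,ur]$, interpolate $A(i)=A_i$ and $B_i'(i+\varepsilon j)=B_{i,j}$, run the packing with parameter $\varepsilon^w$ at $2w-1$ points so that the $\varepsilon^{-(w-1)}$ Lagrange denominators are absorbed, and then interpolate in $\xi$ and evaluate at $i+\varepsilon j$; your explicit splitting $F_i(A(i+\varepsilon j),B_{i,j})=F_i(A_i,B_{i,j})+F_i(A(i+\varepsilon j)-A_i,B_{i,j})$ just spells out the paper's congruence $A(i+\varepsilon j)\equiv A_i \pmod{\varepsilon}$.
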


The transformation multiplies the compression rate by $w/(2w-1)$, so
allowing independent left inputs costs less than a factor of two.

\begin{proof}
Since every source has the same left input $A$,
\[
 S[p,q,ur](A,[\,V_1\ \cdots\ V_u\,])
 =[\,S[p,q,r](A,V_1)\ \cdots\ S[p,q,r](A,V_u)\,].
\]
Concatenating the $u$ encoded right inputs gives $q\times ur$
coefficient matrices. Splitting the output into column blocks
before applying the original decoder gives an SL-PP using one
copy of $S[p,q,ur]$.

Write $B_i=[\,B_{i,0}\ \cdots\ B_{i,w-1}\,]$, with $q\times r$
coefficient matrices in each block. It suffices to recover the
block products $F_i[p,q,r](A_i,B_{i,j})$ and concatenate them.

Interpolate coefficientwise polynomials $A(\xi),B_i'(\xi)$ of degree
at most $w-1$ satisfying\footnote{Lagrange interpolation at fixed
points is linear in the prescribed values, as are evaluation at a
fixed point and coefficient extraction from the resulting polynomial.}
\[
 A(i)=A_i,\qquad B_i'(i+\varepsilon j)=B_{i,j}
 \quad(1\le i\le w,\ 0\le j<w).
\]
These interpolations preserve the allowed monomials.\footnote{The encoding at the points
$i+\varepsilon j$ was proposed by Codex after I
prompted it to prove this result using Reed--Muller
decoding~\cite[Section~2.2]{sudan02}.}
The coefficients of $A$ are independent of $\varepsilon$; the
Lagrange formula for $B_i'$ has denominators equal to nonzero
rational multiples of $\varepsilon^{w-1}$. Thus $\varepsilon^{w-1}B_i'$
has coefficients polynomial in $\varepsilon$, and
$A(i+\varepsilon j)\equiv A_i\pmod{\varepsilon}$.

Let $E^R_\eta,D_\eta$ be the right encoder and decoder of
this one-source packing, with its parameter denoted by $\eta$.
Since the left encoder is the identity,
at each $k=0,\ldots,2w-2$ compute
\[
 D_{\varepsilon^w}\!\left(
 S[p,q,ur]\!\left(A(k),E^R_{\varepsilon^w}\bigl((B_i'(k))_{i=1}^w\bigr)\right)
 \right).
\]
This uses one copy of $S[p,q,ur]$ per point. By
Definition~\ref{def:computation}, each source is linear in its right
input. Since the encoders and decoder are linear, the packing output
and its error are also linear in the right inputs.
Before substituting the right inputs, the error
is divisible by $\varepsilon^w$. Since it is linear in those inputs,
their denominators cancel at most $w-1$ powers of $\varepsilon$.
Hence these values
agree modulo $\varepsilon$ with
$P_i(k)=F_i(A(k),B_i'(k))$.
Since $F_i$ multiplies inputs of degree at most $w-1$ in $\xi$
and selects coefficients, $\deg_\xi P_i\le2w-2$.
Interpolation from these $2w-1$ values therefore gives
\[
 \widetilde P_i(\xi)=P_i(\xi)+\varepsilon R_i(\xi,\varepsilon),
\]
where $R_i$ is polynomial in $\xi,\varepsilon$.
Evaluating at $\xi=i+\varepsilon j$ yields
\[
 \widetilde P_i(i+\varepsilon j)
 \equiv F_i(A(i+\varepsilon j),B_{i,j})
 \equiv F_i(A_i,B_{i,j})\pmod{\varepsilon}.
\]
All encoders and decoders are linear with coefficients in
$\Q[\varepsilon,\varepsilon^{-1}]$. Concatenating the recovered blocks gives the
$w$ independent targets with the original polynomial shapes.
\end{proof}

\section{A fixed matrix multiplication identity}\label{sec:conversion}

We iterate the round constructed below until the coefficient matrices
have fixed size, then evaluate the remaining polynomial products
by interpolation.

\subsection{From columns to all matrix dimensions}

We first define the shape on which the round will operate and
establish its symmetry in $a,b$:
$Q_{a,b}\cong Q_{b,a}$.
This will let us use the same reduction to increase either length.
The key identity is
\begin{equation}\label{eq:full-input-exchange}
 \operatorname{Full}_{b,a}=\operatorname{Full}_{a,b}^{12}.
\end{equation}
The combined shape contains every role permutation of Full
equally often, so exchanging $a$ and $b$ merely reorders its
tensor factors. 

\begin{lemma}[The product for a round]\label{lem:round-shape}
For each integer $d\ge1$, let $T_{a,b}$ be as in~\eqref{eq:paired-shape}
and put
\[
 Q_{a,b}:=T_{a,b}\otimes T_{a,b}^{12}\otimes T_{a,b}^{13}.
\]
For all integers $a,b\ge1$,
\[
 Q_{a,b}\cong Q_{b,a},\qquad
 \operatorname{eff}(Q_{a,b})=(a+b-1)^{6d}.
\]
For every $p,q,r\ge1$, $Q_{1,1}[p,q,r]$ is ordinary matrix
multiplication with dimensions $[p,q,r]$.
\end{lemma}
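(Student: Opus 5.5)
The plan is to reduce everything to the six role permutations of the univariate shape $\operatorname{Full}_{a,b}$, using the relation $\operatorname{Mid}_{a,b}=\operatorname{Full}_{a,b}^{23}$ from~\eqref{eq:full-mid-exchange} and the rule $(P^\sigma)^\tau=P^{\tau\sigma}$. Write $F=\operatorname{Full}_{a,b}$. By Lemma~\ref{lem:isomorphism}, role exchanges distribute over tensor products up to isomorphism, and tensor products are associative and commutative up to isomorphism. Hence
\[
 T_{a,b}\cong (F\otimes F^{(23)})^{\otimes d},\quad
 T_{a,b}^{12}\cong (F^{(12)}\otimes F^{(12)(23)})^{\otimes d},\quad
 T_{a,b}^{13}\cong (F^{(13)}\otimes F^{(13)(23)})^{\otimes d}.
\]
The six elements $e,(23),(12),(12)(23),(13),(13)(23)$ are exactly the cosets $\{\sigma,\sigma(23)\}$ of $\langle(23)\rangle$ in $S_3$, so they exhaust $S_3$. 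Therefore $Q_{a,b}\cong\bigotimes_{\sigma\in S_3}(F^{\sigma})^{\otimes d}$, each permutation appearing $d$ times.

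For the symmetry $Q_{a,b}\cong Q_{b,a}$, I would use~\eqref{eq:full-input-exchange}: $\operatorname{Full}_{b,a}=F^{(12)}$. This equation is an honest equality, since swapping the left and right supports $\{z^i:i<a\}$ and $\{z^j:j<b\}$ keeps the output support $\{z^k:k<a+b-1\}$. Consequently
\[
 Q_{b,a}\cong\bigotimes_{\sigma\in S_3}(F^{\sigma(12)})^{\otimes d}.
\]
Right multiplication by $(12)$ permutes $S_3$, so this is a reordering of the same factors, and commutativity and associativity give $Q_{b,a}\cong Q_{a,b}$. The main subtlety is bookkeeping: the composition convention $(P^\sigma)^\tau=P^{\tau\sigma}$ must be applied consistently. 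Whichever side the multiplication lands on, it is a bijection of $S_3$, so the argument is insensitive to this. A second subtlety is that the isomorphism must be variable renaming only; this is fine because each factor lives in its own variable.

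For the effort, I would use~\eqref{eq:effort-tensor} to reduce to the univariate factors. Lemma~\ref{lem:role-exchanges} says role exchanges preserve effort, so each of the $6d$ factors has effort $\operatorname{eff}(F)$. For $F$ the ranges are $(a,b,s)$, so $e_1=\min\{s,2a+b-2,a+2b-2\}=s=a+b-1$. This gives $(a+b-1)^{6d}$; the effort is stated for shapes, with matrix dimensions $[1,1,1]$.

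Finally, at $a=b=1$ every support of $F$ is $\{z^0\}=\{1\}$. Inversion and role permutation keep all three supports equal to $\{1\}$, and tensor products of the supports $\{1\}$ give $\{1\}$. So $Q_{1,1}=(\{1\},\{1\},\{1\})$ exactly, not merely up to isomorphism. By the remark after Definition~\ref{def:computation}, $Q_{1,1}[p,q,r]$ is then ordinary matrix multiplication with dimensions $[p,q,r]$.
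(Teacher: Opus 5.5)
Your proof is correct and follows the paper's argument: you expand $Q_{a,b}$ into tensor factors that are role permutations of $\operatorname{Full}_{a,b}$, and you use $\operatorname{Full}_{b,a}=\operatorname{Full}_{a,b}^{12}$ to show that swapping $a,b$ only reorders those factors. You handle the effort and the base case exactly as the paper does. The only difference is packaging: you phrase the reordering as right multiplication by $(12)$ permuting $S_3$, whereas the paper explicitly computes $\operatorname{Mid}_{a,b}^{12}=\operatorname{Full}_{b,a}^{13}$ and $\operatorname{Mid}_{a,b}^{13}=\operatorname{Mid}_{b,a}$ and then pairs the factors.
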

\begin{proof}
Equation~\eqref{eq:paired-shape} and Lemma~\ref{lem:isomorphism} give
\[
 \begin{aligned}
 Q_{a,b}\cong\bigl[
 &(\operatorname{Full}_{a,b}\otimes\operatorname{Mid}_{a,b})\\
 &\otimes(\operatorname{Full}_{a,b}^{12}\otimes\operatorname{Mid}_{a,b}^{12})\\
 &\otimes(\operatorname{Full}_{a,b}^{13}\otimes\operatorname{Mid}_{a,b}^{13})
 \bigr]^{\otimes d}.
 \end{aligned}
\]
By~\eqref{eq:full-input-exchange},
$\operatorname{Full}_{a,b}^{12}=\operatorname{Full}_{b,a}$.
Using~\eqref{eq:full-mid-exchange} and the convention for successive
role exchanges, the two exchanged Mid factors satisfy
\[
 \begin{aligned}
 \operatorname{Mid}_{a,b}^{12}
 &=(\operatorname{Full}_{a,b}^{23})^{12}
  =(\operatorname{Full}_{a,b}^{12})^{13}
  =\operatorname{Full}_{b,a}^{13},\\
 \operatorname{Mid}_{a,b}^{13}
 &=(\operatorname{Full}_{a,b}^{23})^{13}
  =(\operatorname{Full}_{a,b}^{12})^{23}
  =\operatorname{Mid}_{b,a}.
 \end{aligned}
\]
Substituting and regrouping by Lemma~\ref{lem:isomorphism} gives
\[
 \begin{aligned}
 Q_{a,b}\cong\bigl[
 &(\operatorname{Full}_{a,b}\otimes\operatorname{Full}_{b,a})\\
 &\otimes(\operatorname{Mid}_{a,b}\otimes\operatorname{Mid}_{b,a})\\
 &\otimes(\operatorname{Full}_{a,b}^{13}\otimes\operatorname{Full}_{b,a}^{13})
 \bigr]^{\otimes d}.
 \end{aligned}
\]
Exchanging $a,b$ reverses each pair. By Lemma~\ref{lem:isomorphism},
this only renames variables, proving $Q_{a,b}\cong Q_{b,a}$.
Each Full factor has scalar effort $a+b-1$; Lemma~\ref{lem:role-exchanges}
and~\eqref{eq:effort-tensor} therefore give
$\operatorname{eff}(Q_{a,b})=(a+b-1)^{6d}$.
At $a=b=1$, every Full and Mid support is $\{1\}$, which role
exchanges and tensoring preserve. Thus $Q_{1,1}[p,q,r]$ is
ordinary matrix multiplication by Definition~\ref{def:computation}.
\end{proof}

We now construct a packing for $Q_{a,b}$ from the packing for
$T_{a,b}$. This packing reduces all three matrix dimensions
simultaneously. Applying it twice, using
$Q_{a,b}\cong Q_{b,a}$ between applications, gives a round
that increases both polynomial lengths.

\begin{lemma}[A round]\label{lem:round}
Fix an integer $d\ge1$ and use the shapes $T_{a,b},Q_{a,b}$ above.
Let $L,L',u,w\ge1$ be integers with $w\ge u$.
Suppose that, for both
$(a,b,b')=(L,L,L')$ and $(L',L,L')$, a PP, exact or modulo
$\varepsilon$, packs $w$ copies of
$T_{a,b}[1,1,w]$ using at most $2w$ copies of $T_{a,b'}[1,1,u]$.
Put $\Lambda=\lfloor w/u\rfloor^2$.
Then, for every integer $x\ge1$, there is a PP of $w^3$ copies of
\[
 Q_{L,L}[w^2\Lambda x,w^2\Lambda x,w^2\Lambda x]
\]
using $64w^3$ copies of $Q_{L',L'}[w^2x,w^2x,w^2x]$.
Exactness or validity modulo $\varepsilon$ is preserved.
\end{lemma}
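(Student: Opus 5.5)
The plan is to build one \emph{step} that reduces all three matrix dimensions at once, apply it twice with the isomorphism $Q_{L,L'}\cong Q_{L',L}$ of Lemma~\ref{lem:round-shape} in between, and compose. Before anything else, I will add unused sources (Observation~\ref{obs:elementary}) so that each hypothesised PP uses exactly $2w$ copies of $T_{a,b'}[1,1,u]$.

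\emph{One step.} Fix $(a,b,b')$, one of the two allowed triples. Apply Lemma~\ref{lem:role-exchanges} with $\sigma=(12)$ and with $\sigma=(13)$ to the given PP. The exchange rules in Section~\ref{sec:selected-products} give the following packings.
\begin{itemize}
\item $(12)$: $w$ copies of $T^{12}_{a,b}[w,1,1]$ using $2w$ copies of $T^{12}_{a,b'}[u,1,1]$.
\item $(13)$: $w$ copies of $T^{13}_{a,b}[1,w,1]$ using $2w$ copies of $T^{13}_{a,b'}[1,u,1]$.
\end{itemize}
I will tensor the original PP with these two via Lemma~\ref{lem:combine}, and then with the trivial exact PP of one ordinary $[N,N,N]$ matrix product by itself. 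After regrouping the factors with Lemma~\ref{lem:isomorphism}, this gives a PP of $w^3$ copies of $Q_{a,b}[wN,wN,wN]$ using $8w^3$ copies of $Q_{a,b'}[uN,uN,uN]$. Validity modulo $\varepsilon$ is preserved, and the degrees in $\varepsilon$ add.

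\emph{Two steps and composition.} Put $\lambda=\lfloor w/u\rfloor$, so $\Lambda=\lambda^2$.
\begin{itemize}
\item \emph{Step 1}, with $(a,b,b')=(L,L,L')$ and $N=w\lambda^2x$: it packs $w^3$ copies of $Q_{L,L}[w^2\Lambda x]^{3}$ using $8w^3$ copies of $Q_{L,L'}[uw\lambda^2x]^{3}$.
\item \emph{Renaming}: by Lemma~\ref{lem:round-shape} and Lemma~\ref{lem:isomorphism}, these sources may be replaced by $Q_{L',L}$ at the same dimensions.
\item \emph{Step 2}, with $(a,b,b')=(L',L,L')$ and $M=w\lambda x$: it packs $w^3$ copies of $Q_{L',L}[w^2\lambda x]^{3}$ using $8w^3$ copies of $Q_{L',L'}[uw\lambda x]^{3}$.
\end{itemize}
Since $u\lambda\le w$, we have $uw\lambda^2x\le w^2\lambda x$. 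So each source of step 1 is a zero-padded instance of a target of step 2. Concretely, embed the smaller coefficient matrices into the larger ones and read off the corresponding output block; this is linear and exact.

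Repeating step 2 eight times (Observation~\ref{obs:elementary}) gives $8w^3$ targets using $64w^3$ sources. Composing with step 1 by Lemma~\ref{lem:composition} yields $w^3$ copies of $Q_{L,L}[w^2\Lambda x]^{3}$ using $64w^3$ copies of $Q_{L',L'}[uw\lambda x]^{3}$. Finally, $uw\lambda x\le w^2x$, so zero-padding the sources once more brings them to $Q_{L',L'}[w^2x]^{3}$, as stated.

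\emph{Main obstacle.} The substance is all in the earlier lemmas; the delicate part is bookkeeping. First, the matrix dimensions must be tracked through the role exchanges. The $(12)$ exchange turns $[1,1,w]$ into $[w,1,1]$, and the $(13)$ exchange turns it into $[1,w,1]$, so that the tensor product really is $[w,w,w]$. Second, the regrouping of $T\otimes T^{12}\otimes T^{13}$ must match the definition of $Q$ on both the target and the source side, which Lemma~\ref{lem:isomorphism} handles. Third, the padding inequalities $u\lambda\le w$ must be chosen so that both compositions fit. I expect the first point to need the most care: I will check the $[p,q,r]$ permutation rules explicitly for each exchange.
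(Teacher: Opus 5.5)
Your proposal is correct and follows the paper's proof: the same one-step packing obtained by tensoring with the $(12)$ and $(13)$ exchanges and an identity matrix-product factor, the same renaming $Q_{L,L'}\cong Q_{L',L}$, eightfold repetition of the second step, composition, and final padding. The only difference is bookkeeping. You run the second step at side $w\lambda x$ and zero-pad in the middle. The paper instead takes $p=q=r=u\Lambda x$, so the intermediate sides match exactly, and pads only at the end using $u^2\Lambda\le w^2$. Both choices are valid.
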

\begin{proof}
For either triple $(a,b,b')$ and any integers $p,q,r\ge1$, tensor
the given packing with its $(12)$ and $(13)$ transposes and the
identity packing for $\operatorname{Full}_{1,1}[p,q,r]$.
The last factor changes only matrix dimensions by
Definition~\ref{def:tensor}. Lemmas~\ref{lem:role-exchanges}
and~\ref{lem:combine} give $w^3$ copies of
$Q_{a,b}[wp,wq,wr]$ using at most $(2w)^3=8w^3$ copies of
$Q_{a,b'}[up,uq,ur]$. Pad the source count to $8w^3$ by
Observation~\ref{obs:elementary}.

Apply the first packing at $p=q=r=w\Lambda x$. Its $8w^3$
sources have shape $Q_{L,L'}$ and square coefficient matrices
of side $wu\Lambda x$. By Lemmas~\ref{lem:round-shape}
and~\ref{lem:isomorphism}, we may rename these sources to
shape $Q_{L',L}$.

The second packing, at $p=q=r=u\Lambda x$, handles $w^3$
such products using $8w^3$ sources of shape $Q_{L',L'}$
and side $u^2\Lambda x$. Repeat it eight times
(Observation~\ref{obs:elementary}) to handle all $8w^3$
intermediate products, using $64w^3$ sources in total.
Composing with the first packing
(Lemma~\ref{lem:composition}) gives the required $w^3$
targets of shape $Q_{L,L}$ and side $w^2\Lambda x$.
Pad the final sources to side $w^2x$, since
$u^2\Lambda\le w^2$.
All operations preserve exactness or validity modulo $\varepsilon$.
\end{proof}

\subsection{Construction of a Strassen-type identity}

We finally use our PP to provide a Strassen-like matrix multiplication identity.

\begin{theorem}[A finite identity from scalar polynomial packings]\label{thm:packing-identity}
Fix integers $m>n\ge2$.
Suppose that, for every triple of integers $a,b,\delta\ge1$ with
$n\delta\ge a+2b-1$, there exists an SL-PP modulo
$\varepsilon$ of $m$ scalar targets, each of shape
$\operatorname{Full}_{a,b}$ or $\operatorname{Mid}_{a,b}$,
using $n$ scalar sources of shape $\operatorname{Full}_{a,b+\delta}$.
Then, for every
\[
 \tau>\frac{3\log(1+3/(n-1))}{2\log(m/n)},
\]
there exists an exact rational bilinear algorithm
for matrix multiplication of some side $v\ge2$ using $r<v^\tau$
scalar multiplications.
\end{theorem}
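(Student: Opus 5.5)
The plan is to chain the lemmas already proved: same-left amplification (Lemma~\ref{lem:sl}), removal of the same-left condition (Lemma~\ref{lem:sl-to-pp}), rounds (Lemma~\ref{lem:round}), evaluation of the remaining polynomial products (Lemma~\ref{lem:evaluation}), and Bini's conversion (Lemma~\ref{lem:bini}).

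\textbf{Step 1: the two packings needed by a round.} Fix $L\ge1$ and set $\delta=\lceil(3L-1)/(n-1)\rceil$ and $L'=L+\delta$. Then $n\delta\ge 3L+\delta-1$. Hence the hypothesis applies at $(a,b)=(L,L)$, since $a+2b-1=3L-1$, and at $(a,b)=(L',L)$, since $a+2b-1=3L+\delta-1$.

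\textbf{Step 2: amplify and drop the same-left condition.} For each of the two pairs, apply Lemma~\ref{lem:sl} with $p=q=r=1$. This gives an SL-PP of $w_0=\lceil m^{2d}/(d+1)\rceil$ targets $U_i\cong T_{a,b}$ using $u=n^{2d}$ copies of $T_{a,b+\delta}$. Lemma~\ref{lem:sl-to-pp} then gives a PP of $U_i[1,1,w_0]$ using $2w_0-1\le 2w_0$ copies of $T_{a,L'}[1,1,u]$. The left inputs are now independent, so Lemma~\ref{lem:isomorphism} lets us rename each target to $T_{a,b}$. Both pairs give the same $w=w_0$ and $u$, and $w\ge u$ once $d$ is large, because $m>n$.

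\textbf{Step 3: iterate rounds.} Lemma~\ref{lem:round} gives a PP of $w^3$ copies of $Q_{L,L}[w^2\Lambda x]$ (cubic dimensions) using $64w^3$ copies of $Q_{L',L'}[w^2x]$, with $\Lambda=\lfloor w/u\rfloor^2$. Start at $L_0=1$, where $Q_{1,1}$ is plain matrix multiplication by Lemma~\ref{lem:round-shape}. Iterate $L_{k+1}=L_k+\lceil(3L_k-1)/(n-1)\rceil$ for $R$ rounds, composing with Lemma~\ref{lem:composition}. Choose $x$ at each stage so the dimensions chain: stage $k$ uses side $w^2\Lambda^{R-k}$ times a constant. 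Repeat copies via Observation~\ref{obs:elementary} so that one starts from $w^3$ targets and ends with $(64w^3)^R$-scale source counts.

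At the end, the sources $Q_{L_R,L_R}[N]$ are evaluated by Lemma~\ref{lem:evaluation}, each costing $N^3(2L_R-1)^{6d}$ scalar multiplications. The targets are (copies of) matrix multiplication of side $v=N\Lambda^R$.

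\textbf{Step 4: the exponent.} Use one target copy, discarding the others via Observation~\ref{obs:elementary}. The count compares
\[
r\approx (64)^R w^{3R}\cdot w^{6R}(2L_R)^{6d}\cdot(\text{const})
\quad\text{with}\quad
v^\tau\approx (w/u)^{6R\tau}(\ldots).
\]
The precise bookkeeping of the $w^3$-fold multiplicity is the delicate part. I would exploit that the $w^3$ targets are all genuine products, treating them as a block of a larger multiplication or simply counting amortized cost per target.

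The asymptotic rates are as follows. Here $w/u\approx (m/n)^{2d}/(d+1)$ per dimension, so $\Lambda^{R}$ grows like $(m/n)^{4dR}$. Meanwhile $L_R$ grows like $(1+3/(n-1))^R$ times a constant, so $(2L_R)^{6d}$ grows like $(1+3/(n-1))^{6dR}$. Per round, the ratio of logs is therefore
\[
\frac{6d\log(1+3/(n-1))}{4d\log(m/n)}=\frac{3\log(1+3/(n-1))}{2\log(m/n)}.
\]
The factors $64$, $d+1$, the $+1$ in $L_{k+1}$, and $u^2\Lambda\le w^2$ padding contribute only $O(R)$ or $O(R\log d)$, against the $\Theta(dR)$ main terms. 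So first take $d$ large, then $R$ large, to get $r<v^\tau$ for any $\tau$ above the threshold. This gives a bilinear algorithm modulo $\varepsilon$. Lemma~\ref{lem:bini} then converts it into an exact one.

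\textbf{Main obstacle.} The hardest step is the bookkeeping in Step 4: making all error terms, especially the growth $L_{k+1}\le(1+3/(n-1))L_k+1$ and the multiplicity and padding losses, provably negligible. I would first handle the $+1$ by noting $L_k\ge1$, which gives $L_{k+1}\le(1+3/(n-1)+1/L_k)L_k$. If this is too lossy, I would instead start from a large $L_0$ via a preliminary isomorphism trick.
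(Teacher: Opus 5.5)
Your overall architecture matches the paper: $\delta_L=\lceil(3L-1)/(n-1)\rceil$, Lemmas~\ref{lem:sl} and~\ref{lem:sl-to-pp} at $p=q=r=1$, rounds chained from $L_0=1$, $d$ fixed before the number of rounds grows, and Lemma~\ref{lem:bini} at the end. However, Step~4 has a genuine gap, exactly at the point you leave unresolved. You claim the source count grows like $(64w^3)^R$, giving $r\approx 64^Rw^{3R}w^{6R}(2L_R)^{6d}$. If that were true it would be fatal. Since $w=\lceil m^{2d}/(d+1)\rceil$, we have $\log w=\Theta(d)$, so a factor $w^{9R}$ adds $\Theta(dR)$ to $\log r$. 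That is the same order as your main terms, and it would push the threshold up to roughly $(18\log m+6\log\rho)/(4\log(m/n))$, where $\rho=1+3/(n-1)$. Your ``ratio of logs'' simply drops these factors, so the count you display does not support your conclusion. Also, $v^\tau=(w^2\Lambda^R)^\tau$ grows like $(w/u)^{2R\tau}$, not $(w/u)^{6R\tau}$.

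The missing observation is that the batch multiplicity never compounds. Lemma~\ref{lem:round} turns one batch of $w^3$ targets into $64w^3$ sources. That is exactly $64$ batches of the size the next round's packing consumes. Taking $x=\Lambda^{R-j-1}$ in round $j$ makes shapes and sides match. Repetition and composition then pack one batch of $w^3$ copies of $Q_{1,1}[v,v,v]$, with $v=w^2\Lambda^R$, into only $64^Rw^3$ sources $Q_{L_R,L_R}[w^2,w^2,w^2]$. Lemma~\ref{lem:evaluation} then gives $r=64^Rw^9(2L_R-1)^{6d}$. Here $w^9$ is a constant once $d$ is fixed, and keeping one target costs nothing further.

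The $+1$ in the recursion is harmless: $L_{k+1}+1/(\rho-1)\le\rho\,(L_k+1/(\rho-1))$, so $L_R=O(\rho^R)$. No large $L_0$ is needed, and starting from one would lose the base case $Q_{1,1}$ anyway. Hence $r\le C(64\rho^{6d})^R$. Choose $d$ with $64\rho^{6d}<\Lambda^\tau$, which is possible since $\log\Lambda=4d\log(m/n)+O(\log d)$. Then taking $R$ large gives $r<v^\tau$.
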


\begin{proof}
For every $L\ge1$, put
\[
 \delta_L=\left\lceil\frac{3L-1}{n-1}\right\rceil,
 \qquad L'=L+\delta_L,
 \qquad \rho=1+\frac3{n-1}.
\]
Then $n\delta_L\ge3L+\delta_L-1$, so the hypothesis supplies
the two SL-PPs with $(a,b)=(L,L)$ and $(L',L)$, both with
source right-input length $L'$. Also $L'\le\rho L+1$.

For each $d\ge1$, take $u,w$ from~\eqref{eq:counts} and put
$\Lambda=\lfloor w/u\rfloor^2$. As $d\to\infty$, these formulas
give $w/u\to\infty$ and $\log\Lambda=4d\log(m/n)+O(\log d)$.
Consequently,
\[
 \frac{\log64+6d\log\rho}{\log\Lambda}
 \longrightarrow\frac{3\log\rho}{2\log(m/n)}<\tau.
\]
Fix $d$ large enough that
\begin{equation}\label{eq:packing-growth}
 64\rho^{6d}<\Lambda^\tau.
\end{equation}
Since $\tau>0$, this also gives $\Lambda>1$ and hence $w\ge u$.

For each $a\in\{L,L'\}$, apply Lemmas~\ref{lem:sl}
and~\ref{lem:sl-to-pp} at $p=q=r=1$, then rename the targets by
Lemma~\ref{lem:isomorphism}. This gives $w$ copies of
$T_{a,L}[1,1,w]$ using $2w-1$ copies of $T_{a,L'}[1,1,u]$.
Since $2w-1\le2w$, the hypotheses of Lemma~\ref{lem:round} hold,
with counts independent of $L,L'$.
By Lemma~\ref{lem:round-shape} and~\eqref{eq:effort}, each round
has compression rate
\[
 \frac{\Lambda^3}{64}
  \left(\frac{2L-1}{2L'-1}\right)^{6d},
\]
independent of $x$.

Starting from $L_0=1$, set $L_{j+1}=L_j+\delta_{L_j}$.
Since $L_{j+1}\le\rho L_j+1$ and $\rho>1$, we have
$L_k=O(\rho^k)$.
Fix $k\ge1$ and put $v_k=w^2\Lambda^k$.
Apply the rounds $(L_j,L_{j+1})$ with $x=\Lambda^{k-j-1}$
for $j=0,\ldots,k-1$. Successive shapes and matrix dimensions
match. The fixed counts $w^3,64w^3$ give $64$ complete batches for
the next round from each batch in the current round.
Repetition (Observation~\ref{obs:elementary}) and composition
(Lemma~\ref{lem:composition}) therefore pack $w^3$ copies of
$Q_{1,1}[v_k,v_k,v_k]$ using sources
$Q_{L_k,L_k}[w^2,w^2,w^2]$, with compression rate
\[
 C_k=\prod_{j=0}^{k-1}
 \left[\frac{\Lambda^3}{64}
 \left(\frac{2L_j-1}{2L_{j+1}-1}\right)^{6d}\right]
 =\frac{\Lambda^{3k}}{64^k(2L_k-1)^{6d}}.
\]
The length factors telescope, and validity modulo $\varepsilon$
is preserved throughout.

By Lemma~\ref{lem:round-shape}, the targets are matrix products,
with total effort $w^3v_k^3$. The definition of compression rate
and Lemma~\ref{lem:evaluation} therefore bound the scalar
multiplications needed to evaluate all sources by
\[
 E_k:=\frac{w^3v_k^3}{C_k}
     =64^k w^9(2L_k-1)^{6d}
     \le C\bigl(64\rho^{6d}\bigr)^k,
\]
where $C$ depends only on the fixed data and $d$.
Compose the evaluations by
Lemma~\ref{lem:composition} and retain one target by
Observation~\ref{obs:elementary}.
This computes one matrix product of side $v_k$; retaining it leaves
the constant batch factor $w^3$ in the cost.
By~\eqref{eq:packing-growth}, a sufficiently large computable $k$
gives $E_k<v_k^\tau$. Lemma~\ref{lem:bini} gives the claimed exact
identity.
\end{proof}

For the primal packing, the two required length increases follow
from the same inequality. The theorem then gives the desired exponent.

\begin{corollary}[A fixed matrix multiplication identity]\label{thm:identity}
For every $\tau>9/4$, there exist computable integers $v\ge2$, $r\ge1$
with $r<v^\tau$ and an exact rational bilinear algorithm for
matrix multiplication of side $v$ using $r$ scalar multiplications.
\end{corollary}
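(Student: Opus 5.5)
The plan is to instantiate Theorem~\ref{thm:packing-identity} with the primal packing of Theorem~\ref{lem:primal}, taking $m=c+2$ and $n=c$ for a large integer $c\ge2$. The threshold in Theorem~\ref{thm:packing-identity} then becomes
\[
 \tau_c:=\frac{3\log(1+3/(c-1))}{2\log(1+2/c)},
\]
and I will check that $\tau_c\to9/4$ as $c\to\infty$. Given $\tau>9/4$, I fix $c$ with $\tau_c<\tau$ and read off the identity from Theorem~\ref{thm:packing-identity}.

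First I verify the hypothesis of Theorem~\ref{thm:packing-identity}. Fix integers $a,b,\delta\ge1$ with $c\delta\ge a+2b-1$. Apply Theorem~\ref{lem:primal} with $t=\delta$ and $p=q=r=1$. This gives an SL-PP modulo $\varepsilon$ of $c+1$ copies of $\operatorname{Full}_{a,b}$ and one $\operatorname{Mid}_{a,b}$, using $c$ copies of $\operatorname{Full}_{a,b+\delta}$. That is exactly $m=c+2$ scalar targets, each Full or Mid, using $n=c$ scalar sources of the required shape. The condition $m>n\ge2$ holds for $c\ge2$.

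Next I handle the limit. As $c\to\infty$ we have $\log(1+3/(c-1))=3/c+O(c^{-2})$ and $\log(1+2/c)=2/c+O(c^{-2})$. Their ratio therefore tends to $3/2$, so $\tau_c\to\frac32\cdot\frac32=\frac94$. Since $\tau>9/4$, some finite $c$ has $\tau_c<\tau$, and such a $c$ can be found by computing $\tau_c$ for increasing $c$, which is effective.

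Theorem~\ref{thm:packing-identity} then yields an exact rational bilinear algorithm for matrix multiplication of some side $v\ge2$ using $r<v^\tau$ scalar multiplications. For computability I note that its proof chooses $d$ and $k$ by explicit, checkable inequalities, namely \eqref{eq:packing-growth} and $E_k<v_k^\tau$. Lemma~\ref{lem:bini} likewise produces an explicit $V$. Hence $v$ and $r$ are computable.

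The only real obstacle is the asymptotic comparison of the two logarithms, which is routine. A minor point is that $\tau$ may be irrational. In that case I can first replace $\tau$ by a rational $\tau'\in(9/4,\tau)$ with $\tau_c<\tau'$, so that all the inequalities being checked are decidable.
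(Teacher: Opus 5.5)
Your proposal is correct and follows the paper's proof: you instantiate Theorem~\ref{thm:packing-identity} with the primal packing of Theorem~\ref{lem:primal} at $t=\delta$, $p=q=r=1$, $m=c+2$, $n=c$, and take $c$ large so that the threshold $\frac32\log(1+3/(c-1))/\log(1+2/c)$, which tends to $9/4$, falls below $\tau$. Your extra remarks on computability, including passing to a rational $\tau'\in(9/4,\tau)$, spell out points the paper leaves implicit and do no harm.
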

\begin{proof}
For every integer $c\ge2$, Theorem~\ref{lem:primal} at $p=q=r=1$
and $t=\delta$ supplies the packings required by
Theorem~\ref{thm:packing-identity} with $m=c+2$ and $n=c$.
The exponent threshold is
\[
 \frac32\,\frac{\log(1+3/(c-1))}{\log(1+2/c)}
 \longrightarrow\frac94.
\]
For any prescribed $\tau>9/4$, choose $c$ large enough that this
threshold is below $\tau$, and apply the theorem.
\end{proof}

\subsection{Integer arithmetic}

We finish by applying standard block recursion to the fixed identity,
as in Strassen's algorithm~\cite{strassen69}. Its parameters and
coefficients are constants once the requested exponent is fixed.

\begin{proof}[Proof of Theorem~\ref{thm:main}]
Choose the identity in Corollary~\ref{thm:identity} with
$\tau=9/4+\gamma$. Clear the fixed denominators in the encoders and
decoder to obtain integer linear maps whose output is
$\Delta AB$ for a fixed positive integer $\Delta$.

For $n=v^j$, partition each input into a $v\times v$ array of
blocks of side $n/v$ and apply the fixed identity to these blocks.
This substitution is valid because, after expanding the linear
combinations in the bilinear algorithm, every term contains one entry
of $A$ followed by one entry of $B$. Equality of their scalar
coefficients therefore also gives the identity for matrix blocks.
Compute its $r$ block products recursively, using the integer
encodings and decodings and dividing each output entry exactly
by $\Delta$ at every level, until the blocks are scalars.
All operations outside the recursive products cost $O(n^2)$, so
\[
 T(n)\le rT(n/v)+O(n^2)=O(n^\tau),
\]
since $r<v^\tau$ and $\tau>2$. Padding to the next power of $v$
handles every positive integer $n$.
\end{proof}

For $B$-bit integer inputs, encoding adds only $O(1)$ bits per
level. Products have at most twice the resulting input length,
and decoding adds $O(\log n)$ bits over the remaining levels.
All intermediate integers therefore have $O(B+\log n)$ bits.
The bit cost is
$O(n^{9/4+\gamma}\mathsf M(B+\log n))$, where $\mathsf M(b)$
is the cost of multiplying $b$-bit integers; multiplication and
exact division by the fixed constants take linear time.

\begin{remark}[Other products and their duals]
Nothing after the primal construction requires Full and Mid specifically.
We may replace them by shapes $P_{a,b}$ and their duals, provided that
$P_{b,a}\cong P_{a,b}^{12}$ for all integers $a,b\ge1$ and
$P_{1,1}[p,q,r]$ is ordinary matrix multiplication for all integers
$p,q,r\ge1$. These two properties give the parameter symmetry and
matrix-multiplication base case needed for recursion.

Fix integers $m>n\ge1$ and constants $\rho>1$, $C\ge0$.
Suppose $\operatorname{eff}(P_{L,L})=O(L^\kappa)$ for some fixed
$\kappa>0$.
Suppose that for every integer $L\ge1$ there exists an integer
$1\le L'\le\rho L+C$ such that, for both $a=L$ and $a=L'$,
there exists an SL-PP modulo $\varepsilon$ of $m$ scalar targets,
each $P_{a,L}$ or $\widehat P_{a,L}$, using $n$ scalar sources
$P_{a,L'}$.

The proofs of Lemmas~\ref{lem:sl}, \ref{lem:sl-to-pp},
\ref{lem:round-shape}, and~\ref{lem:round} apply with these replacements.
Since $\operatorname{eff}(Q_{L,L})=\operatorname{eff}(P_{L,L})^{6d}$,
the same iteration as in Theorem~\ref{thm:packing-identity} gives,
for every
\[
 \tau>\frac{3\kappa\log\rho}{2\log(m/n)},
\]
there exists an exact rational bilinear algorithm for matrix
multiplication using $r<v^\tau$ scalar multiplications for some $v\ge2$.
In particular, exponents arbitrarily close to $2$ would follow
from such packings with
$\kappa\log\rho/\log(m/n)\longrightarrow4/3$.
\end{remark}

\paragraph{Lean formalization.}
In a single run,
\href{https://aristotle.harmonic.fun/projects/49c70ea6-754e-4480-aefa-df75bf515f07}{Aristotle}
generated Lean~4 code to formalize the rational
matrix-multiplication identities and the resulting arithmetic-operation
bound~\cite{lean-formalization}, using Mathlib's standard definitions of matrices and matrix
multiplication.
(We have not independently verified the generated code or
its correspondence with the paper.)
The formalization comprises approximately 3,600
lines, compared with approximately 20,700 lines in OpenAI's
development (which also includes substantial general-purpose infrastructure) for the $9/4$ bound.
It proves that, for every $\gamma>0$, every positive size $n$ admits
a correct program for rational matrix multiplication using
$O(n^{9/4+\gamma})$ scalar operations, including all operations in
the linear encodings and decoding. Beyond its compactness, the
formalization certifies the rational identities underlying our
construction: all coefficients and program constants lie
in $\Q$. OpenAI's formalized arithmetic bound is over $\mathbb C$.
Rational coefficients allow our identities to be implemented exactly
by clearing denominators, without approximating complex constants.

\paragraph{AI Use Statement.}
OpenAI Codex (Astra) was used extensively to study~\cite{original},
learn about the problem, and explore candidate constructions.
While the main ideas behind the constructions and abstractions
are my own, Codex was invaluable in exploring definitions and proof
approaches and in formalizing proof ideas and sketches. Most of
the text has been written and revised with Codex through detailed and repeated
prompts from me.
Claude provided feedback on the manuscript and its presentation.

\appendix
\section{Standard results}\label{app:standard}

\subsection{Supporting lemmas}\label{app:supporting}

\begin{lemma}[Linearity in the inputs]\label{lem:input-linearity}
The output of a PP and its error are linear in either input list
when the other is fixed. For an SL-PP, the same holds for the
common left input and the right-input list.
\end{lemma}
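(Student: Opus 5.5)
The plan is to unwind the definitions. A packing's output is a composition of three maps: the right and left encoders, the source selected products, and the decoder. We then check linearity of each stage separately.

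First, by Definition~\ref{def:computation}, each selected product $S_j(X,Y)=\sum_{w\in\mathcal K}[w](XY)\,w$ is bilinear in the coefficient lists of $X$ and $Y$. Indeed, $[w](XY)=\sum_{uv=w}X_uY_v$, and each $X_uY_v$ is a matrix product, bilinear in the scalar entries. Hence, fixing the left input list, $Y\mapsto S_j(X,Y)$ is linear, and symmetrically in $X$.

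Next, the output of the packing is
\[
 D\bigl((S_j((E^L(A))_j,(E^R(B))_j))_j\bigr).
\]
Fix $A$. Then $(E^L(A))_j$ is a fixed source left input for each $j$. The map $B\mapsto(E^R(B))_j$ is linear, so each source output is linear in $B$, and applying the linear decoder $D$ keeps the output linear in $B$. The argument with $B$ fixed and $A$ varying is identical, using linearity of $E^L$ and bilinearity of $S_j$ in the left slot. All maps have coefficients in $\Q[\varepsilon,\varepsilon^{-1}]$, so the linearity is over that ring, or over $\Q(\varepsilon)$, with scalar input entries as the variables.

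For the error, note that it is the output minus the target output $(T_i(A_i,B_i))_i$. The target output is itself linear in each input list when the other is fixed, by the same bilinearity of selected products. A difference of linear maps is linear, so the error is linear in each input list separately.

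For an SL-PP, restrict to $A_i=A$ for all $i$. The left encoder then copies $A$ to every source, so the output is $D((S_j(A,(E^R(B))_j))_j)$ and the target is $(T_i(A,B_i))_i$. Both are linear in the common $A$ for fixed $B$, since $A$ enters each $S_j$ and $T_i$ linearly and $D$ is linear. Both are also linear in $B$ for fixed $A$. The same therefore holds for their difference.

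I do not expect a real obstacle. The only point needing care is that the error $\varepsilon R_\varepsilon$ is \emph{defined} as a difference of two maps, each of which is linear. The proof should therefore not appeal to any property of $R_\varepsilon$ beyond this representation.
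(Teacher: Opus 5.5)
Your proposal is correct and follows essentially the same route as the paper's proof. Both arguments rest on bilinearity of selected products from \eqref{eq:selected-product}, which is preserved under the linear encoders and decoder, under identifying the left inputs in the SL case, and under subtracting the (also bilinear) target output; you simply spell out each of these steps in more detail than the paper's two-line argument.
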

\begin{proof}
Selected products are linear in either input when the other is
fixed by~\eqref{eq:selected-product}.
Applying separate linear encoders and a linear decoder, identifying
the left inputs, and subtracting the target output preserve this property.
\end{proof}

\subsection{Properties of polynomial packings}\label{app:operations}\label{app:bilinear}

\begin{proof}[Proof of Observation~\ref{obs:elementary}]
Use zero insertion and output projection for padding, set discarded
inputs to zero for target deletion, and ignore unused sources.
For SL target deletion, zero only the unwanted right inputs, preserving
the common left input. Independent copies of the maps repeat a PP;
repetition multiplies both counts and both total efforts by $k$.
These operations preserve exactness and polynomial errors divisible
by $\varepsilon$.
\end{proof}

\begin{proof}[Proof of Lemma~\ref{lem:role-exchanges}]
Role exchanges permute exponent-range lengths and preserve $pqr$,
so~\eqref{eq:effort} gives
$\operatorname{eff}(P^\sigma)=\operatorname{eff}(P)$.

The exchange $(12)$ follows from
$(AB)^{\mathsf T}=B^{\mathsf T}A^{\mathsf T}$, by swapping the encoders
and transposing coefficient matrices.

For $(23)$, let
$P=\operatorname{Prod}_{\mathcal I,\mathcal J,\mathcal K}[p,q,r]$,
write $P_A(B)=P(A,B)$, and let $\iota$ invert monomial labels.
Here transposing a linear map means transposing its matrix on
coefficient lists.
For any output list $Z$, extending coefficient lists by zero,
\eqref{eq:selected-product} gives
\[
 [v^{-1}]
 \widehat{P}(A^{\mathsf T},\iota Z)
 =\sum_{u\in\mathcal I}A_u^{\mathsf T}Z_{uv}
 =(P_A^{\mathsf T}Z)_v
 \qquad(v\in\mathcal J).
\]
Thus transposing the packing identity with fixed left inputs gives
\[
 T_A=D S_{E^LA}E^R
 \quad\Longrightarrow\quad
 T_A^{\mathsf T}=(E^R)^{\mathsf T}S_{E^LA}^{\mathsf T}D^{\mathsf T}.
\]
Under these coefficient identifications, the dual encoders and decoder
are $E^L,D^{\mathsf T},(E^R)^{\mathsf T}$. Transposition and reindexing
preserve their allowed coefficients, the counts, and polynomial
errors divisible by $\varepsilon$. Every role permutation is a
composition of $(12)$ and $(23)$, so repeated application proves
the PP conclusion for every $\sigma\in S_3$.

For an SL-PP, fix the common left input $A$ and apply the same
transpose identity to its right-input map. Since every source receives
$A$ unchanged, every dual source receives the same $A^{\mathsf T}$
unchanged. Thus the new left encoder is again the copying map,
and the preceding count and error conclusions also apply.
\end{proof}

\begin{proof}[Proof of Lemma~\ref{lem:composition}]
Write $E_i=(E_i^L,E_i^R)$ for the encoder of packing $i=1,2$. Use
\[
 E=(E_2^L\circ E_1^L,\ E_2^R\circ E_1^R),
 \qquad D=D_1\circ D_2.
\]
Substitution proves exact correctness, with $m$ targets and $\ell$ sources.
Modulo $\varepsilon$, choose nonnegative integers $p_L,p_R,p_D$
such that $\varepsilon^{p_L}E_1^L$, $\varepsilon^{p_R}E_1^R$, and
$\varepsilon^{p_D}D_1$ have entries polynomial in $\varepsilon$.
First replace $\varepsilon$ by
$\varepsilon^{p_L+p_R+p_D+1}$ in $(E_2,D_2)$.
By Lemma~\ref{lem:input-linearity}, substituting the outer encoders
and applying $D_1$ cancels at most $p_L+p_R+p_D$ powers of
$\varepsilon$ from the inner error.
The added error is therefore polynomial and divisible by $\varepsilon$.
The compression rates multiply because the total effort of $(S_j)$
cancels in their product.
\end{proof}

\begin{proof}[Proof of Lemma~\ref{lem:combine}]
Pair the lists, giving $mm'$ targets and $nn'$ sources. For each
encoder and decoder $M\in\{E^L,E^R,D\}$, define its entries by
\[
 M_{(a,a'),(b,b')}=(M_1)_{a,b}(M_2)_{a',b'}.
\]
By Definition~\ref{def:tensor}, expanding the source products and
these maps gives sums that separate over the two sets of indices.
Thus, if $c^{(i)}_{o,a,b}$ is the coefficient of the product of
left-input entry $A_a$ and right-input entry $B_b$ in output
$o$ of packing $i$, the combined coefficient is
\[
 c_{(o,o'),(a,a'),(b,b')}
 =c^{(1)}_{o,a,b}\,c^{(2)}_{o',a',b'}.
\]
Together with Lemma~\ref{lem:input-linearity}, this proves exact
correctness for arbitrary inputs. Modulo $\varepsilon$, these products
remain polynomial with the required constant terms, and their degrees
are at most $d_\varepsilon+d'_\varepsilon$. For SL-PPs, use the common left indices;
the product of the copying encoders copies the combined left input.

Effort multiplies by~\eqref{eq:effort}, since the variable lists are
disjoint and the matrix dimensions multiply. Summing over the paired
lists multiplies the total target efforts and the total source efforts,
so the compression rates multiply as well.
\end{proof}

\begin{proof}[Proof of Lemma~\ref{lem:isomorphism}]
Identity, inverse, and composed variable permutations prove reflexivity,
symmetry, and transitivity. Regrouping or exchanging tensor factors
only regroups or exchanges their disjoint variable lists. Renamings
on separate factors combine into one renaming of their union, proving
compatibility with tensoring. Role exchanges permute the three supports
and invert the appropriate monomials; these operations commute with
variable renaming and distribute over products in disjoint variables.
This proves both assertions involving role exchanges.

A variable renaming induces coefficient permutations on inputs and
outputs. For a target replacement, undo its input permutations before
encoding and apply its output permutation after decoding. For a source
replacement, permute its encoded inputs and undo its output permutation
before decoding. These changes preserve the counts and correctness.
For an SL-PP, one common renaming leaves the left encoder a copying map.
Coefficient permutations also preserve polynomial errors divisible by
$\varepsilon$.
\end{proof}

\begin{proof}[Proof of Lemma~\ref{lem:evaluation}]
Empty supports give the zero product. Otherwise, univariate
interpolation at $a+b-1$ distinct nonzero rational points gives an exact
packing for input intervals of lengths $a,b$: their product occupies
an interval of $a+b-1$ exponents, and multiplying by a monomial reduces
coefficient recovery to ordinary interpolation. Evaluation and recovery
are linear, and any requested output coefficients outside this interval
are zero. For output interval length $h$, apply the same construction
to the $(23)$ and $(13)$ role exchanges, whose input interval lengths
are $(a,h)$ and $(h,b)$, respectively. Exchanging the roles back by
Lemma~\ref{lem:role-exchanges} gives bounds $a+h-1$ and $b+h-1$;
the scalar multiplication sources are unchanged.
Thus the minimum bound is $e_1$.
Tensor these interval packings over the variables by
Lemma~\ref{lem:combine}. To restrict to the actual supports, insert
zeros at missing input monomials and discard unwanted output
coefficients; these linear maps preserve exactness and the source count.
Apply the resulting scalar packing to each triple of matrix indices
and sum over the inner index; this uses
$pqr\prod_k e_k=\operatorname{eff}(P)$ scalar multiplications.
\end{proof}

\subsection{Bini's approximate-to-exact conversion}\label{app:block-recursion}

\begin{proof}[Proof of Lemma~\ref{lem:bini}]
Regard the given bilinear algorithm as a PP with $r$ scalar
multiplication sources, as above.
Let $d_\varepsilon$ bound the degree in $\varepsilon$ of the output
of this packing; the output is polynomial in $\varepsilon$ by the
definition of validity modulo $\varepsilon$. Tensor $h$ copies of the
packing using Lemma~\ref{lem:combine}.
Constant matrix products
remain constant matrix products, so this gives side $v^h$ using
$r^h$ scalar products. Its output has degree at most $hd_\varepsilon$
in $\varepsilon$ and the required matrix product as the constant term.
Evaluate the rule at $hd_\varepsilon+1$ distinct nonzero rational
values of $\varepsilon$ and interpolate its constant coefficient.
Evaluation and interpolation use rational linear maps, so this gives
an exact rational bilinear algorithm using at most
$(hd_\varepsilon+1)r^h$ scalar multiplications.
Since
\[
 (hd_\varepsilon+1)\left(\frac{r}{v^\tau}\right)^h\longrightarrow0,
\]
some finite $h$ gives fewer than $V^\tau$ scalar multiplications
for $V=v^h\ge2$. All steps are computable from the given algorithm.
\end{proof}

\end{document}